\theoremstyle{definition}
\newtheorem{thm}{Theorem}[section]
\def\beq{\begin{equation}}
\def\eeq{\end{equation}}
\def\bea{\begin{eqnarray}}
\def\eea{\end{eqnarray}}
\def\Sol{\textbf{S}}
\def\SolC{\textbf{S}_\mathbb{C}}
\def\Hil{\mathcal{H}}
\def\HilC{\overline{\mathcal{H}}}
\def\Co{\textrm{C}_0^{\infty}(M)}
\def\Reals{\mathbb{R}}
\def\block{\Box}
\def\hat{\widehat}
\def\ideq{\equiv}               
\newcommand{\ud}{\,\mathrm{d}}
\title{A Distinguished Vacuum State for a Quantum Field in a Curved Spacetime: Formalism, Features, and Cosmology }
\author[a,b]{Niayesh Afshordi,}
\author[a,b]{Siavash Aslanbeigi}
\author[a,c]{and Rafael D. Sorkin}
\affiliation[a]{Perimeter Institute for Theoretical Physics, 31 Caroline St. N., Waterloo, ON, N2L 2Y5, Canada}
\affiliation[b]{Department of Physics and Astronomy, University of Waterloo, Waterloo, ON, N2L 3G1, Canada}
\affiliation[c]{Department of Physics, Syracuse University, Syracuse, NY 13244-1130, U.S.A.}
\emailAdd{nafshordi@perimeterinstitute.ca}
\emailAdd{saslanbeigi@perimeterinstitute.ca}
\emailAdd{rsorkin@perimeterinstitute.ca}
\abstract{
We define a distinguished
``ground state'' or ``vacuum''
for a free scalar quantum field
in a globally hyperbolic region of an arbitrarily curved
spacetime.
Our prescription is motivated by the recent construction
\cite{Johnston,RDS} of a quantum field theory on a background
causal set using only knowledge of the retarded Green's function.  We
generalize that construction to continuum spacetimes and find that it
yields a distinguished
{\it vacuum} or {\it ground state} for a non-interacting,
massive or massless scalar field.
This state is defined for all compact regions and for
many noncompact ones.
In a static spacetime we find that our vacuum coincides with the
usual ground state.
We determine it also
for
a radiation-filled,
spatially homogeneous and isotropic cosmos,
and show that the super-horizon correlations
are approximately the same as those of a thermal state.
Finally, we illustrate the
inherent non-locality of our prescription with the example of a
spacetime which sandwiches a region with curvature in-between flat
initial and final regions.
}
\begin{document}
\maketitle
\flushbottom
\section{Introduction}
The framework known as ``quantum field theory in curved spacetime''
concerns the interaction of quantum fields with gravity,
but only in an asymmetrical sense.
Non-gravitational,
 ``matter'' fields are treated in accord with quantum
principles while
their gravitational ``back reaction''
is either
ignored
entirely or described by a
semiclassical form of the Einstein equations.  Although not a
fundamental theory of nature, this framework has provided us with
profound insights into an eventual
theory of quantum gravity.
Important examples include Hawking radiation by black holes \cite{Hawking},
the Unruh effect \cite{bisognano, Unruh},
and the
generation of Gaussian-distributed random perturbations
in the theory of cosmic inflation \cite{Mukhanov}.
In all these examples a choice of vacuum ---
or at least
a reasonable reference state of the field
--- is of crucial importance.
It therefore seems
unsatisfactory that
as it stands,
quantum field theory lacks a general notion of ``vacuum''
which extends very far beyond flat spacetime.

Formulations of quantum field theory in Minkowski spacetime do provide a
distinguished vacuum, but it rests heavily on a particle interpretation
of the field that is closely tied to the properties of the Fourier
transform and the availability of plane waves.  More abstract treatments
tend to trace the uniqueness of the vacuum to Poincare-invariance, but
that is tied even more closely to flat space.  It is thus unclear how
one might extend the notion of vacuum beyond the case of spacetimes with
a high degree of symmetry.
(Moreover, even a large symmetry-group does not always yield a unique
vacuum without further input.  In de-Sitter for example, one has the one
complex-parameter family of ``$\alpha$-vacua'' \cite{Allen}, all of
which are invariant under the full de-Sitter group.
To single one value
of $\alpha$ out from the rest, one needs
to impose the further condition
that the
two-point function
take the so-called Hadamard form.)

One might even question whether a quantum field theory is well-defined at
all before a vacuum is specified.  What is probably the best
studied mathematical framework for quantum field theory in flat space,
that of the
Wightman axioms, incorporates assertions about the vacuum
among its basic assumptions, and it relies on them
in proving such central results as the PCT and spin-statistics
theorems.
It is therefore noteworthy that the so-called algebraic approach to
quantum field theory has been able to proceed a great distance
without relying on a notion of vacuum, or indeed any unique
representation of the quantum fields at all.
In place of a Poincar{\'e}-invariant vacuum, it has been proposed to
rely on a distinguished {\it class} of states, the so-called Hadamard
states (which are well-suited to renormalization of the stress-tensor by
``point-splitting''), supplemented by an assumption about a
short-distance asymptotic expansion for products of quantum fields,
namely the operator product expansion or ``OPE'' (see for
example \cite{Wald}, \cite{Wald2} and references therein).
If such a ``purely algebraic'' approach were to establish itself more
generally, it might diminish the interest in distinguished ``vacua''
for curved spacetimes.  Conversely, if a reasonable definition of a
preferred vacuum state could be obtained it might remove some of the
motivation for a purely algebraic formulation of quantum field
theory.\footnote
{We suspect that lasting enlightenment about the ``best'' formulation of
  quantum field theory will only arrive together with a solution of the
  problem of quantum gravity, by means of a greater theory within which
  that of quantum field theory in curved spacetime will have to be subsumed.}

Let us remark also that histories-based formulations of quantum
mechanics tend to fuse the concept of state with that of equation of
motion.  This shows up clearly in formulations that start from the
``quantum measure'' \cite{DJS, RDS2, RDS3} or ``decoherence functional'' \cite{Hartle},
neither of which can be defined without furnishing a suitable set of
``initial conditions''.  In this sense, one has no dynamical law at all
before a distinguished ``initial state'' is specified.

At a less formal level, the ability to think in terms of particles
offers an obvious benefit to one's intuition.  And, especially in
relation to cosmology, great interest attaches to the question whether
certain sorts of states can be regarded as ``natural'' to certain
regions of spacetime, a question we return to briefly in section \ref{RadiationEra}.
These, then, are two more reasons why the availability of a
distinguished vacuum could be welcome, whether or not it
is logically necessary to quantum field theory as such.

Moreover, what is logically necessary can change
drastically if one passes from the spacetime continuum to some more
fundamental structure, especially if that structure is discrete.
As we will review later, the entire quantization process --- as usually
conceived --- boils down to selecting an appropriate subspace of the
solution space of the Klein-Gordon equation.  But that way of organizing
the problem seems to break down in the case of a causal set.  There, the
notion of ``approximate solution'' seems to be the best that is
available, and one therefore requires a different starting point.

In \cite{Johnston}, such a starting point was found in
(the discrete analog of)
the retarded
Green function.
On that basis
a complete counterpart of the quantum field theory of a
free scalar field was built up,
and
a unique ``vacuum state'' was derived.
Herein, we generalize that derivation to quantum fields on continuum
spacetimes, showing thereby that there is a sensible way to uniquely
define a vacuum state for a scalar field in any globally hyperbolic
spacetime or region of spacetime.
More precisely, we consider the case of a \textit{free scalar field}
in a \textit{globally hyperbolic} spacetime or region of spacetime,
and in that context we put
forward a definition of
\textit{distinguished vacuum state}
that
applies to all compact regions and to a large class of noncompact regions.

It is thus possible to carry the concept of vacuum far beyond the
confines of Minkowski space by means of definitions we expose in
detail below.  Although, for all of the reasons indicated above, this
possibility is of interest in itself, one naturally wants to know to
what extent, and in what sense, our proposal is ``the right one''?  To
that question, only a sufficient number of particular instances of our
vacuum would seem to be germane.
The examples of Minkowski spacetime
and of globally
static
spacetimes furnish important evidence,
but they contain
little that is new physically.  To judge the ultimate fruitfulness of
our prescription, one should, for example, test it against the behavior
of the ``matter fields'' that one actually
encounters
in the early universe.
In Section \ref{friedmann} we make a start on this kind of test,
beginning with the case of a spatially homegeneous and isotropic
cosmology.

\section{Background}
\label{Background}
In this section, we briefly review the quantization,
along traditional lines,
of a free scalar field on a curved spacetime.
We will be a bit careful with the mathematical technicalities because it will benefit us later.
\footnote{Much of the discussion here will follow that of \cite{Wald} and \cite{Ashtekar}.}
\label{background}
Consider a free, real-valued scalar field $\phi$ on a globally hyperbolic spacetime
($M$, $g_{\mu\nu}$) satisfying the Klein-Gordon equation
with mass-parameter $m\ge0$:
%
%
\beq
 \nabla^{\mu}\nabla_{\mu}\phi(x)+m^2\phi(x)=0 \ ,
 \label{KG}
\eeq
where
$\nabla_\mu$ is the covariant derivative operator on $M$.~\footnote%
{We use signature $(+ - - -)$ and set $\hbar=c=1$.}
%
The condition of global hyperbolicity ensures that \eqref{KG} has a well
posed initial-value formulation (see theorem \ref{main} in Appendix
\ref{theorems}).

Let us review some mathematical structures that are important for both
the classical and quantal theories of a free field.
Consider a
foliation
of $M$ by spacelike Cauchy surfaces $\Sigma_t$, labeled by a time parameter $t$.
Let $\Sol$ be the space of all real $\textrm{C}^{\infty}$ solutions of \eqref{KG}
which induce initial data of compact support on some
(and therefore on every)
$\Sigma_t$.
(This restriction on the
solutions
 is just for convenience, so that various
 mathematical structures are well-defined.)
The retarded and advanced Green's functions $G_{R,A}(x,y)$ associated with \eqref{KG} satisfy
\beq
 (\nabla^{\mu}\nabla_{\mu}+m^2)G_{R,A}(x,y)=-\frac{\delta^4(x-y)}{\sqrt{-g}},
\label{RAG}
\eeq
where $g$ is the determinant of the metric-tensor.
By definition  $G_R(x,y)=0$ unless $x\succ{y}$
(meaning $x$ is inside or on the future lightcone of $y$),
and $G_A(x,y)=0$ unless $x\prec y$.

The so-called Pauli-Jordan
function
is defined as
\beq
 \Delta(x,y)\equiv G_R(x,y)-G_A(x,y) = G_R(x,y) - G_R(y,x)  \ .
 \footnote{See Theorem \ref{main3} of Appendix \ref{theorems}.}
  \
 \label{PJ}
\eeq
From it we define an integral operator $\Delta$:
\beq
 (\Delta f)(x) \equiv \int_M \Delta(x,y) f(y) \ud V_y,
 \label{idelta}
\eeq
where $dV_y=\sqrt{-g(y)}d^4y$ is the metric volume element on $M$,
and we take the domain
of $\Delta$ to be
the space
$\textrm{C}_0^{\infty}(M)$
of
all smooth functions of compact support on $M$.
Since $\Delta f$ is
the difference between
two solutions
of the inhomogeneous Klein-Gordon equation
with the same source $f$, it satisfies the homogeneous Klein-Gordon
equation \eqref{KG}:
\beq
       (\block + m^2) \Delta = 0 \ ,
\eeq
where $\block=\nabla_\mu\nabla^\mu$.
Moreover,
since $f$ has compact support,
$\Delta f$ induces smooth initial data of compact support on all Cauchy surfaces, making
$\Delta$ a map from $\textrm{C}_0^{\infty}(M)$ to $\Sol$.
The operator $\Delta$
(or more generally the corresponding quadratic form)
will be of crucial importance to us.
A symplectic structure
$\Omega:\Sol\times\Sol\rightarrow\Reals$
can be defined on $\Sol$:
\beq
 \Omega(\phi_1,\phi_2)
 \equiv
 \int_{\Sigma_t}
      \left[\phi_1 \nabla_{\mu}\phi_2 - \phi_2 \nabla_{\mu}\phi_1\right]
       dS^\mu \ ,
 \label{symp}
\eeq
where $dS_\mu=n_\mu\sqrt{-h}\ud^3x$
with
 $n^a$  the unit normal to $\Sigma_t$,
and $h$ the determinant
of the induced metric on $\Sigma_t$.
The righthand side of \eqref{symp}
is
well defined
because it is
independent of $t$
for all solutions in $\Sol$.

To pass to the quantum theory, one
introduces operator-valued distributions $\hat{\phi}(x)$ that satisfy
the Klein-Gordon equation, and the canonical commutation relations (CCR)
\beq
  [\hat{\phi}(f),\hat{\phi}(g)]
 =
 i\Omega(\Delta f,\Delta g)
 =
 i\int_M  f(x) \Delta(x,y) g(y)  \ud V_x \ud V_y \ ,
 \label{ccr}
\eeq
for all $f,g\in\textrm{C}_0^{\infty}(M)$, where $\hat{\phi}(f)=\int_M\hat{\phi}(x)f(x)dV_x$. The second
equality follows from theorem \ref{main2} (see Appendix \ref{theorems}).
Equation \eqref{ccr} is typically
expressed
as
\beq
  [\hat{\phi}(x),\hat{\phi}(y)] = i\Delta(x,y).
 \label{CCR}
\eeq
%

To obtain operators in Hilbert space, one requires further
a $*$-representation of these relations.
One typically
works with
irreducible, Fock
representations
constructed
as follows:
 \begin{itemize}
\item Complexify the Klein-Gordon solution space to get $\SolC=\{\phi_1+i\phi_2|\phi_1,\phi_2\in\Sol\}$.
\item Define a map $(\,,\,)_{KG}:\SolC\times\SolC\to\mathbb{C}$ by
  $(\phi_1,\phi_2)_{KG}=i\Omega(\overline{\phi}_1,\phi_2)$,
where the bar denotes complex conjugation.
This map enjoys all the properties of a Hermitian inner product except
that it's not positive definite.
\item Choose any subspace $\Hil\subset\SolC$    with   the following properties:
\begin{itemize}
\item The inner product $(,)_{KG}$ is positive definite on $\Hil$, thus making $\Hil$ into
 a \hbox{(pre-)Hilbert} space over $\mathbb{C}$.
\item $\SolC$ is equal to the span of $\Hil$ and its complex conjugate space $\HilC$.
\item For all $\phi^{+}\in\Hil$ and $\phi^{-}\in\HilC$, we have $(\phi^{+},\phi^{-})_{KG}=0$.
\footnote
{From here on, when we refer to ``a basis $\{\psi_i\}$ of the
 Klein-Gordon solution space'', we mean that $\{\psi_i\}$ is an
 orthonormal basis of $\Hil\subset\SolC$ with the above properties.
 \label{basisFN}}
\end{itemize}
\end{itemize}
The Hilbert space is then taken to be the symmetric Fock space associated with $\Hil$,
the field operators being defined as
\beq
\hat{\phi}(x)=\sum_i\psi_i(x)\hat{a}_i+\overline{\psi}_i(x)\hat{a}^{\dagger}_i,
\label{mode}
\eeq
where $\{\psi_i(x)\}$ is any orthonormal basis of (the Cauchy-completed) $\Hil$
with respect to the inner product $(,)_{KG}$,
and where
$\{\hat{a}_i\}$ are the
annihilation operators associated with $\{\psi_i\}$,
satisfying the usual commutation relations,
$[\hat{a}_i,\hat{a}_j]=0$,
$[\hat{a}_i,\hat{a}^\dagger_j]=\delta_{ij}$.
To show
that \eqref{mode} satisfies the CCR,
write
 $\hat{\phi}(f)
 =\sum_i\big(\int_M\psi_i f\big)\hat{a}_i+\big(\int_M\overline{\psi}_i f\big)\hat{a}^\dagger_i=
 -i\sum_i(\overline{\Delta f},\psi_i)_{KG}\hat{a}_i
       +(\overline{\Delta f},\overline{\psi}_i)_{KG}\hat{a}^\dagger_i$, where in the last equality we have used Theorem \ref{main2}.
Then
\bea
[\hat{\phi}(f),\hat{\phi}(g)]
 &=&-\sum_i(\overline{\Delta f},\psi_i)_{KG}
         (\overline{\Delta g},\overline{\psi}_i)_{KG}
        -(\overline{\Delta f},\overline{\psi}_i)_{KG}
         (\overline{\Delta g},\psi_i)_{KG} \notag \\
 &=& (\overline{\Delta f}|\sum_i |\psi_i)_{KG}(\psi_i|
   -|\overline{\psi}_i)_{KG}(\overline{\psi}_i||\Delta g)\notag\\
 &=& (\overline{\Delta f},\Delta g)_{KG} = i\Omega(\Delta f,\Delta g),
\eea
where we have used the fact that
$\sum_i|\psi_i)_{KG}(\psi_i| - |\overline{\psi}_i)_{KG}(\overline{\psi}_i|$
is the identity operator on $\SolC$
(because $\{\psi_i\}$'s satisfy
$(\psi_i,\psi_j)_{KG}=\delta_{ij}$,
$(\psi_i,\overline{\psi}_j)_{KG}=0$, and
$(\overline{\psi}_i,\overline{\psi}_j)_{KG}=-\delta_{ij}$).
Finally,
the vacuum
is
defined as the state annihilated by all $\hat{a}_i$:
$\hat{a}_i\,|0\!>\,=0$.

The trouble, of course, is that the subspace
$\Hil\subset\SolC$
is not unique.
Even if we limit ourselves to Fock representations,
there are many ways to choose
$\Hil$,
and with each one comes a
different set of operators $\hat{a}_i$ and a different vacuum.

\section{The S-J Vacuum}
%
The kernel $i\Delta(x,y)$ defined by \eqref{PJ} has two basic properties:
\begin{itemize}
\item Antisymmetric --- because $G_A(x,y)=G_R(y,x),$\footnote{See Theorem \ref{main3} of Appendix \ref{theorems}.}
\item Hermitian --- i.e. $\overline{i\Delta(y,x)}=i\Delta(x,y)$.
\end{itemize}
Let $L^2(M)$ denote the Hilbert space of all square-integrable functions
on $M$\footnote%
{$L^2(M) = \{\psi:M\to\mathbb{C} | \int_M |\psi(x)|^2\ud V_x<\infty\}$.}
with the usual inner product
\beq
 <\phi,\psi> \; = \int_{M} \overline{\phi}(x) \psi(x) \ud V_x \ .
 \label{L2}
\eeq
Then, the integral operator $i\Delta$ associated with $i\Delta(x,y)$
[defined in \eqref{idelta}]
will be Hermitian on a subspace of $L^2(M)$: $<\phi,i\Delta\psi> \,=\, <i\Delta\phi,\psi>$
for all $\phi$ and $\psi$ in the domain of $i\Delta$.
\footnote%
{$<\phi,i\Delta\psi>$ = $\int_M\int_M i\Delta(x,y)\overline{\phi}(x)\psi(y)\ud V_x\ud V_y=
 -\int_M\int_M i\Delta(y,x)\overline{\phi}(x)\psi(y)\ud V_x\ud V_y=\int_M\overline{i\Delta\phi(y)}\psi(y)\ud V_y=<i\Delta\phi,\psi>$.}
Strictly speaking, $i\Delta$ is in general
only a densely defined quadratic form in $L^2(M)$,
but in introducing our prescription of a vacuum,
let us at first set aside the functional-analytical
subtleties associated with the domain of $i\Delta$
and assume it to be a self-adjoint operator on $L^2(M)$,
so that $i\Delta(x,y)$ can be ``diagonalized'' in the sense of the spectral theorem.
We can now state our prescription for a distinguished state,
which we call the \textit{S-J vacuum} after the authors of \cite{Johnston,RDS}.
\\\\
\textit{Assuming that $i\Delta$ is selfadjoint,
 a vacuum state $|SJ\!\!>$
 can be defined covariantly via $<\!SJ| \; \hat{\phi}(x) \, \hat{\phi}(y) \, |SJ\!> \, = Pos(i\Delta(x,y))$,
 where $Pos(i\Delta(x,y))$ is the positive spectral projection of $i\Delta$
 provided by the spectral theorem.}
\\\\
Informally speaking, this just means the following: $(i)$ construct $\Delta(x,y)$ by
anti-symmetriz\-ing the retarded Green's function, which is uniquely determined
from the Klein-Gordon equation in any globally hyperbolic spacetime (see theorem
\ref{main} in Appendix \ref{theorems}), $(ii)$ diagonalize $i\Delta(x,y)$
in the $L^2$ norm, $(iii)$ take its `positive part' to be the two-point function. The spectral theorem
gives precise mathematical sense to this prescription so long as $i\Delta$ is self-adjoint.

Of course a two-point function is not yet a full characterization of a
state, but
it becomes so if we take the state to be ``gaussian'' by appropriately
expressing the $n$-point functions in terms of the two-point function
(i.e. by means of the Wick rule).  It then will follow,
given (\ref{mode}) and (\ref{W}) below,
that the Gel'fand-Naimark-Segal
(GNS)
representation associated with
$|SJ\rangle$ will be a Fock representations with
$|SJ\rangle$ playing the role of vacuum.


Let us consider first the special case
where $i\Delta(x,y)\in L^2(M\times M)$,
\footnote{i.e. $\int_M\int_M|i\Delta(x,y)|^2\ud V_x \ud V_y< \infty$.}
as is the case in a bounded\footnote%
{Bounded = having compact closure.}
globally hyperbolic region of 1+1 dimensional Minkowski space, for
instance. \footnote%
{For a massive scalar field in 1+1 dimensional Minkowski space with metric $ds^2=dt^2-dx^2$,
 $G_R(t,x;t',x')=G(t-t';x-x')$ where
 $G(t,x)=\theta(t)\theta(s^2)\frac{1}{2}J_0(ms)$,
 $J_0$ is a Bessel function of the first kind,
 and $s^2=t^2-x^2$\cite{Johnston}.}
Then
$i\Delta$ becomes a so-called Hilbert-Schmidt integral operator and the following version of the spectral theorem
applies \cite{RS}: there exists an orthonormal
basis $\{T_k\}_{k=1}^{\infty}$ of $L^2(M)$ consisting of
eigenfunctions of $i\Delta$ which satisfy $i\Delta T_k=\lambda_kT_k$
with $\lambda_k\in\mathbb{R}$.
Using this theorem and the fact that $\Delta(x,y)$ itself is real,
we deduce $i\Delta \overline{T}_k=-\lambda_k\overline{T}_k$,
which in turn makes it possible to split $i\Delta(x,y)$ into a
{\it positive} and a {\it negative} part:
$$
 i\Delta(x,y)=\sum_{k=1}^{\infty}\lambda_k\big\{T_k(x)\overline{T}_k(y)-\overline{T}_k(x)T_k(y)\big\}
$$
(taking $\lambda_k>0$ now).
In this case, our prescription can be expressed as
\beq
 W(x,y) \ideq <SJ|\hat{\phi}(x)\hat{\phi}(y)|SJ> = \sum_{k=1}^{\infty}\lambda_kT_k(x)\overline{T}_k(y) \ .
 \label{W}
\eeq
This is equivalent to introducing field operators
$\hat{\phi}(x)=\sum_k\sqrt{\lambda_k}\big\{T_k(x)\hat{a}_k+\overline{T}_k(x)\hat{a}_k^\dagger\big\}$,
because
$(i)$ eigenfunctions of $i\Delta$ with $\lambda\not=0$
necessarily
satisfy the Klein-Gordon equation\footnote%
{$i\Delta f = \lambda f \implies (\block+m^2)f=(i/\lambda)(\block+m^2)\Delta f = 0$
 since $(\block+m^2)\Delta=0$.}
and
$(ii)$ the commutation relations are trivially satisfied.
\footnote{Note that in this case there is no need to `smear out' the field operators
 with smooth test-functions of compact support, because $i\Delta(x,y)$ is a completely
 well-defined function (in the $L^2$ sense).}
The S-J vacuum is then the state in Fock space that is annihilated by all $\hat{a}_k$.

Now let us turn to  3+1 dimensions, where $i\Delta$ is no longer Hilbert-Schmidt.
Nonetheless, $i\Delta(x,y)$ is still
a distribution and,
at least within Minkowski space,
it defines a self-adjoint operator
if $M$ is bounded
(see section \ref{com1} of Appendix \ref{comments}).
Thanks to the spectral theorem, our prescription then retains a precise mathematical
sense, and
it is not too far-fetched to assume that this continues to hold for curved spacetimes,
because curvature should not change the singularity structure of $i\Delta(x,y)$ too drastically.
\footnote{Fewster and Verch have now established rigorously that
  our proposal is well-defined on all bounded globally hyperbolic
  spacetimes and that the S-J vacuum is a ``pure quasi-free state''
  \cite{FV}.}

Although selfadjointness might seem to be merely a technical issue, it highlights the fact
that the S-J vacuum depends on a choice of (globally hyperbolic) spacetime region.
Indeed, as we have just seen, our prescription is not guaranteed to be well defined unless
one chooses a region that is bounded, both spatially and temporally.  Thus arise two
questions:  To what extent does our prescription
depend on boundary condtions,
and to what extent does it
remain well-defined in unbounded spacetimes?

In answering the first question, one must distinguish between spatial boundaries (also
referred to as timelike boundaries) and temporal ones.  Spatial boundaries are familiar to
us from putting fields in a box, Casimir effect, etc; and they seem unproblematic.  When
they are present the S-J vacuum will be sensitive to one's choice of boundary conditions,
because the retarded and advanced Green functions be depend on them.  But this is as it
should be since the physics genuinely depends on the boundary conditions.  We will also
consider below regions which are unbounded spatially, but no special difficulties will
arise from that feature.

The case of a temporal boundary (spacelike or null) is less familiar.  The first thing to
notice is that
{\it boundary conditions are neither needed nor possible} in this case,
since the region is (by assumption) globally hyperbolic.
Mathematically, this very satisfactory feature stems from the fact that
$\Delta$ is not a differential operator but an integral one.
Nevertheless, one must bear in mind that the ``ground state'' one ends
up with, does depend on the region with which one begins.
In itself, this dependence on the region
merely expresses the nonlocal character of our definition.
One might for example be interested in which ``vacuum'' would be appropriate to an early
stage of expansion of the cosmos, and one would not want in that case to apply our
definition to the full spacetime, including its whole future development.
However, one might also want to apply the definition to unbounded spacetimes like
Minkowski space, and in such cases one needs to worry about dependence on an eventual infrared
cutoff.
If the metric is static, for instance, why should time play any
role in what the vacuum state looks like?

In dealing with such instances, it is always
possible to work first
 in a truncated spacetime,
and later send the temporal boundary to infinity.
In section \ref{com3} of Appendix \ref{comments}, we apply this
method to the simple harmonic oscillator and show that it
succeeds
in the sense that the resulting S-J vacuum is the minimum
energy state of the Hamiltonian.
As we will later demonstrate, this continues to be true for all static spacetimes.
Another example
of such a calculation can be found in \cite{Yasaman}, where the spectrum of
$i\Delta$ is computed in a 1+1 dimensional causal-diamond, and it is
found that
(up to the usual infrared ambiguities that affect massless scalars in 2d)
the resulting two-point function has the correct (i.e. Minkowski) limiting
behaviour as the boundaries of the diamond tend to infinity.
However, there are also cases where taking a temporal cutoff to infinity is an ill-defined
procedure.
In \cite{deSitter}, it
is shown that the so-called Poincar{\'e} patch of de Sitter space provides an example of
such a case.
(We suspect that this kind of ambiguity can be understood intuitively as
 the failure of $i\Delta$ to admit a selfadjoint extension which is
 unique.  However we don't know how to pose such a question properly,
 because $\Delta$ is densely defined only as a quadratic form on
$L^2(M)$, not as an operator from $L^2(M)$ to itself.)
\\\\
Diagonalizing
$i\Delta$ lies at the heart of our prescription,
at least in practice.
In this section, we will attempt to frame this problem as
generally
as possible.
In subsequent sections, we will deal with more concrete examples.
Let $\{u_k(x)\}$ be a basis for the Klein-Gordon solution space.
\footnote{See footnote \ref{basisFN} on the definition of ``basis''.} 
Expanding
the field operator in terms of these modes as
$\hat{\phi}(x)=\sum_ku_k(x)\hat{a}_k+\overline{u}_k(x)\hat{a}_k^\dagger$
and computing the commutator yields,
as we have seen,
\beq
 [\hat{\phi}(x),\hat{\phi}(y)]=\sum_ku_k(x)\overline{u}_k(y)-u_k(y)\overline{u}_k(x).
\eeq
It then follows from the CCR
\eqref{CCR}
that the integral-kernel $i\Delta(x,x')$
takes the form,%
\footnote{In Section \ref{com2} of Appendix \ref{comments} we confirm
 that the two sides of this equation are equal when integrated against
 an arbitrary test function.}
\beq
 i\Delta(x,y)=\sum_{k}u_k(x)\overline{u}_k(y)-\overline{u}_k(x)u_k(y).%
 \label{E1}
\eeq
Of course the choice of
the $u_k(x)$
in this expansion is not unique.
Another set of  modefunctions
$g_k(x)=\sum_{n}\alpha_{kn}u_n(x)+\beta_{kn}\bar{u}_n(x)$
will give the same commutator,
so long as the following normalization conditions are met:
$\sum_k\alpha_{ik}\bar{\alpha}_{jk}-\beta_{ik}\bar{\beta}_{jk}=\delta_{ij}$ and
$\sum_k\alpha_{ik}\beta_{jk}-\beta_{ik}\alpha_{jk}=0$,
this being nothing but a Bogoliubov transformation.

By means of such a transformation,
we can find orthonormal eigenfunctions $T_n(x)$
of $i\Delta$  (with corresponding eigenvalues $\lambda_n$)
starting from any convenient basis $\{u_k(x)\}$ of the Klein-Gordon solution space.
Requiring $T_n(x)$'s to be
eigenfunctions of $i\Delta$ and using \eqref{E1}:
\beq
 i\Delta \, T_n(x) = \lambda_nT_n(x)=\sum_k\langle u_k,T_n\rangle u_k(x)-\langle\bar{u}_k,T_n\rangle\bar{u}_k(x) \ .
\eeq
For notational simplicity, let
$\alpha_{nk}
\equiv
\frac{<u_k,T_n>}{\lambda_n}$
and
$\beta_{nk}
\equiv
-\frac{<\bar{u}_k,T_n>}{\lambda_n}$ so that
$T_n(x)=\sum_k\alpha_{nk}u_k(x)+\beta_{nk}\bar{u}_k(x)$.
These coefficients then satisfy:
\bea
\alpha_{nk}&=&\frac{1}{\lambda_n}\sum_m\alpha_{nm}\langle u_k,u_m\rangle+\beta_{nm}\langle u_k,\bar{u}_m\rangle \label{alpha} \\
\beta_{nk}&=&\frac{-1}{\lambda_n}\sum_m\alpha_{nm}\langle\bar{u}_k,u_m\rangle+\beta_{nm}\langle\bar{u}_k,\bar{u}_m\rangle. \label{beta}
\eea
Requiring these eigenfunctions to be orthonormal
($\forall$ $n,m $)
yields
\bea
\langle T_n,\bar{T}_m\rangle=0 & \Longleftrightarrow & \sum_k\alpha_{nk}\beta_{mk}-\alpha_{mk}\beta_{nk}=0
\label{norm1}\\
\langle T_n,T_m\rangle=\delta_{nm} &\Longleftrightarrow & \sum_k\bar{\alpha}_{nk}\alpha_{mk}-\bar{\beta}_{nk}\beta_{mk}=\frac{\delta_{nm}}{\lambda_m}.
\label{norm2}
\eea
Then, diagonalizing $i\Delta$ boils down to finding $\alpha_{nk}$ and $\beta_{nk}$ by solving these four equations.
(We have not addressed the issue of convergence in the sums appearing
above. In fact, if it turned out that the $\{T_n(x)\}$ and $\{u_n(x)\}$
induce unitarily inequivalent representations of CCR, the above sums
would not converge.)

It is not obvious how this can be done generically.
To proceed, let us simplify
this calculation by \textit{assuming} that
there are modefunctions $\{u_k(x)\}$ that satisfy
\bea
\langle u_k,u_m\rangle&=&\langle u_k,u_k\rangle\delta_{km} \label{assump1} \\
\langle u_k,\bar{u}_m\rangle&=&\langle u_k,\bar{u}_{-k}\rangle\delta_{k,-m}. \label{assump2}
\eea
The notation used here is as follows: for every
 $u_n\in\{u_k\}$, there is one (and only one) member of the
 complex conjugate set $\bar{u}_{-n}\in\{\bar{u}_k\}$,
for which $\langle u_n,\bar{u}_{-n}\rangle$ can be non-zero.
Also, we denote the complex conjugate of $\bar{u}_{-n}$ by $u_{-n}$.

Our
assumption is motivated by spacetimes for which these modefunctions are plane waves.
Under this assumption,  $\alpha_{nk}$ and $\beta_{nk}$ can be found:
\bea
\alpha_{nk}&=&\left[\frac{\lambda_{-n}+\langle u_n,u_n\rangle}{\lambda_n(\lambda_n+\lambda_{-n})}\right]^{\frac{1}{2}}\delta_{nk} \\
\beta_{nk}&=&-\left[\frac{\langle u_n,u_n\rangle-\lambda_{n}}{\lambda_n(\lambda_n+\lambda_{-n})}\right]^{\frac{1}{2}}
e^{-i\text{Arg}(<u_n,\bar{u}_{-n}>)}\delta_{n,-k},
\eea
where $\lambda_n$ is given by
\beq
2\lambda_n=\langle u_n,u_n\rangle-\langle u_{-n},u_{-n}\rangle+
\Big[\big(\langle u_n,u_n\rangle+\langle u_{-n},u_{-n}\rangle\big)^2-4|\langle u_n,\bar{u}_{-n}\rangle|^2\Big]^{\frac{1}{2}}.
\label{eigenV}
\eeq By direct substitution, it can be verified that these indeed solve \eqref{alpha}, \eqref{beta}, \eqref{norm1}
and \eqref{norm2}. Finally, the vacuum state modefunctions as picked by our prescription
$\mathcal{U}_n^{SJ}(x)\equiv\sqrt{\lambda_n}T_n(x)$ take the form
\bea
 \sqrt{2} \mathcal{U}_n^{SJ}(x)&=&
 \left[1+\frac{1}{\sqrt{1-\frac{4|<u_n,\bar{u}_{-n}>|^2}{\left(<u_n,u_n>+<u_{-n},u_{-n}>\right)^2}}}\right]^{\frac{1}{2}} u_n(x) \notag\\
  &-& \left[-1+\frac{1}{\sqrt{1-\frac{4|<u_n,\bar{u}_{-n}>|^2}{\left(<u_n,u_n>+<u_{-n},u_{-n}>\right)^2}}}\right]^{\frac{1}{2}}e^{-i\text{Arg}(<u_n,\bar{u}_{-n}>)} \bar{u}_{-n}(x).
\label{VAC}
\eea
Looking closely at \eqref{VAC}, we see that our prescription picks out a particular basis of the Klein-Gordan
solution space that satisfies \eqref{assump1}, \eqref{assump2}, and
\beq
\frac{|<u_n,\bar{u}_{-n}>|}{<u_n,u_n>+<u_{-n},u_{-n}>}=0.
\label{zero}
\eeq
In a
bounded
region of spacetime where all inner products are finite, \eqref{zero}
implies that
\hbox{$<u_n,\bar{u}_{-n}>=0$}.
That this is a unique choice can be shown on more general grounds (see section \ref{com4} of Appendix \ref{comments}).
Even in an unbounded spacetime, where the inner-products might diverge,
we may deem \eqref{zero} to be satisfied
so long as the denominator diverges more strongly than the numerator.
This is very similar to the case of the
simple harmonic oscillator (see section \ref{com3} of Appendix \ref{comments}).
However, it might be the case that
$\frac{2|<u_n,\bar{u}_{-n}>|}{<u_n,u_n>+<u_{-n},u_{-n}>}\to1$
as the limit is taken to
infinity.
In such a case,
the prescription (\ref{VAC}) fails.
\section{Consistency with Known Vacua: Static Spacetimes}
In static spacetimes,
i.e. spacetimes that admit an everywhere time-like and hypersurface-orthogonal
Killing vector $k^{\mu}$, a natural choice of vacuum modefunctions
exists,
namely the
solutions of the Klein-Gordon equation which are
purely positive frequency with respect to the Killing time.
(The corresponding vacuum state minimizes the Hamiltonian).
For a massive scalar field $\psi(x)$ in a static spacetime the Klein-Gordon equation reads
\beq
\left(\frac{\partial^2}{\partial t^2}+K\right)\psi(x)=0,
\eeq where $K=\frac{1}{g^{00}(\vec{x})}\Big\{\frac{1}{\sqrt{-g(\vec{x})}}\partial_i\big[g^{ij}(\vec{x})\sqrt{-g(\vec{x})}\partial_j\big]+m^2\Big\}$
is a purely spatial differential operator.
Let $L^2(\Sigma)$ denote the Hilbert space of all $L^2$ functions on the spatial domain $\Sigma$ with inner product
\beq
<f,g>_S=\int_{\Sigma}\bar{f}(\vec{x})g(\vec{x})\frac{\sqrt{-h}}{|k|}\ud^3\vec{x},
\eeq  where $|k|=\sqrt{k_{\mu}k^{\mu}}=\sqrt{g_{00}}$, and $h$ is the determinant of the induced metric on $\Sigma$.
 Let us assume that $K$ is a self-adjoint and strictly positive operator on $L^2(\Sigma)$
so that it has a well-defined positive spectrum: $(K u_{k})(\vec{x})=\omega(k)^2u_{k}(\vec{x})$.
Then, it is always possible to find complex solutions of the Klein-Gordon equation
of the form $\psi_{k}(x)=\frac{e^{-i\omega(k) t}}{\sqrt{2\omega(k)}}u_{k}(\vec{x})$  ($\omega>0$) that satisfy
\beq
L_k\psi_k=k^{\mu}\nabla_{\mu}\psi_k=-i\omega \psi_k,
\label{static}
\eeq where $L_k\psi_n$ denotes the Lie derivative of $\psi_n$ along the Killing vector $k^{\mu}$.
Moreover, in static spacetimes, we can choose  $n^{\mu}=\frac{k^{\mu}}{|k|}$  as the unit normal to $\Sigma$.
It then follows that $\{\psi_k(x)\}$ form an orthonormal basis of the Klein-Gordon solution space
so long as $<u_k,u_{k'}>_S=\delta_{kk'}$:
\bea
(\psi_k,\psi_{k'})_{KG}&=&\frac{\omega(k)+\omega(k')}{\sqrt{2\omega(k)}\sqrt{2\omega(k')}}
e^{it(\omega(k)-\omega(k'))}<u_k,u_{k'}>_S=\delta_{kk'} \\
(\bar{\psi}_k,\psi_{k'})_{KG}&=&\frac{\omega(k')-\omega(k)}{\sqrt{2\omega(k)}\sqrt{2\omega(k')}}
e^{-it(\omega(k)+\omega(k'))}<\bar{u}_k,u_{k'}>_S=0.
\eea
\footnote{Note that $(\bar{\psi}_k,\psi_{k'})$ vanishes identically because $<\bar{u}_k,u_{k'}>_S$ is only nonzero when
$\bar{u}_k$ and $u_{k'}$ both have the same
eigenvalue, but in that case $\omega(k')-\omega(k)=0$.}
Let us now turn to the S-J prescription. 
It can be verified that
\bea
<\psi_k,\psi_{k'}>&=&<\sqrt{\frac{\pi g_{00}}{\omega(k)}}u_k,\sqrt{\frac{\pi g_{00}}{\omega(k')}}u_{k'}>_S \delta(\omega(k)-\omega(k'))\\
<\psi_k,\bar{\psi}_{k'}>&=&<\sqrt{\frac{\pi g_{00}}{\omega(k)}}u_k,\sqrt{\frac{\pi g_{00}}{\omega(k')}}\bar{u}_{k'}>_S\delta(\omega(k)+\omega(k'))=0,
\eea
where we have taken $t\in(-\infty,\infty)$. This implies that $i\Delta$ does not mix the positive and negative frequency modefunctions, which immediately follows from \eqref{E1}:
\beq
(i\Delta\psi_k)(x)=\sum_{k'}<\psi_{k'},\psi_{k}>\psi_{k'}(x).
\label{stat}
\eeq This equation can be viewed as a matrix
$\overline{<\psi_{k},\psi_{k'}>}$
multiplying a vector $\psi_{k'}(x)$, where $\overline{<\psi_{k},\psi_{k'}>}$ is \textit{Hermitian} and can be diagonalized.
As a result, it is always possible to find eigenfunctions of $i\Delta$
as linear combinations of \textit{purely positive frequency modes}, which in turn implies that the S-J vacuum state is the same
as the vacuum state defined by $\{\psi_k\}$'s. We have overlooked some potential technical difficulties in this argument. For instance,
it might be the case that the inner products $<\sqrt{g_{00}}u_k,\sqrt{g_{00}}u_{k'}>_S$ are infinite because the spatial domain is not compact.
However, because in this case
the eigenfunctions of $i\Delta$ are only linear combinations of purely positive frequency modes, the resulting two-point function will \textit{not}
depend on the regularization scheme, as it ought to be equal to $<SJ|\hat{\phi}(x)\hat{\phi}(y)|SJ>=\sum_k\psi_k(x)\bar{\psi}_k(y)$.
\footnote{The same argument applies to the $\delta(0)$'s that appear.}
%

Consider, for example,
the 1+1 dimensional Rindler wedge:
\beq
ds^2=e^{2a\xi}(d\eta^2-d\xi^2),
\eeq
where $-\infty<\eta,\xi<\infty$.
With $m=0$,
the normalized positive frequency modes  with respect to the Killing vector $\partial_{\eta}$ are
\beq
 \psi_k=\frac{1}{\sqrt{4\pi|k|}}e^{-i|k|\eta+ik\xi}.
\eeq
These
 modefunctions form an orthonormal basis of the Klein-Gordon solution
space:
$(\psi_k,      \psi_{k'})_{KG} = \delta(k-k')$ and
$(\psi_k,\bar{\psi}_{k'})_{KG} = 0$.
Moreover,
\bea
 <\psi_k,\psi_{k'}>&=&\frac{\delta(|k|-|k'|)}{\sqrt{4|k||k'|}}\int_{-\Lambda}^{\Lambda}e^{i\xi(k'-k)}e^{2a\xi}\ud\xi \\
 <\bar{\psi}_k,\psi_{k'}>&=&\frac{\delta(|k|+|k'|)}{\sqrt{4|k||k'|}}\int_{-\Lambda}^{\Lambda}e^{i\xi(k'+k)}e^{2a\xi}\ud\xi=0,
\eea
where we have regulated the spatial integrals with a cut-off $\Lambda$. In this case, all inner products
vanish except for $<\psi_k,\psi_{k}>$ and $<\psi_k,\psi_{-k}>$ and it follows from \eqref{E1} that
$(i\Delta\psi_k)(x)=<\psi_k,\psi_k>\psi_k(x)+<\psi_{-k},\psi_k>\psi_{-k}(x)$. Then, it can be verified that orthonormal eigenfunctions $T_k(x)$
of $i\Delta$ (with eigenvalue $\lambda_k$) take the form
\beq
 T_k(x)=\frac{1}{\sqrt{2\lambda_k}}\Big(\psi_k(x)+e^{i\text{Arg}<\psi_{-k},\psi_k>}\psi_{-k}(x)\Big)
\eeq
where
\beq
 \lambda_k=\frac{1}{<\psi_{k},\psi_k>+|<\psi_{-k},\psi_k>|}.
\eeq
Therefore, the S-J vacuum modefunctions are
\beq
 \mathcal{U}_n^{SJ}(x)\equiv\sqrt{\lambda_n}T_n(x)=\frac{1}{\sqrt{2}}\Big(\psi_k(x)+e^{i\text{Arg}<\psi_{-k},\psi_k>}\psi_{-k}(x)\Big).
\eeq
In this expression, $\text{Arg}<\psi_{-k},\psi_k>$ depends on the cut-off $\Lambda$ used
to regulate the spatial integrals. However, as previously argued, this
makes no difference
because the two-point function is independent of $\Lambda$.

It is worth noting
that
the foregoing
 analysis does \textit{not} apply to stationary spacetimes that are not
 static,
including cases where the Killing vector under consideration is not everhwhere timelike.
It would
be particularly interesting to investigate the
S-J vacuum
in the
spacetime
of a rotating star with an ergo-region.

\section{Application to Non-stationary Spacetimes}
\label{friedmann}
Quantum field theory on time-dependent backgrounds is of particular
importance because the universe we live in is not static. The choice of
vacuum
 in such cases is not at all trivial.
For example, in a
Friedmann-Lema\^{\i}tre-Robertson-Walker (FLRW) spacetime, one choice of
instantaneous
vacuum
is obtained by minimizing the
Hamiltonian at
the given
instant in time. This might seem like a natural
generalization from static spacetimes, but as is well-known by now,
it suffers from severe physical problems like infinite particle
production \cite{Fulling}.  In this section, we will work out the
S-J vacuum state in a spatially-flat FLRW spacetime for
some specific cases.
(In \cite{DHP}, a prescription was introduced for singling out a
Hadamard state in the case of a spatially-flat FLRW universe with a
scale factor that is either exponential or a power-law at early
times. We defer a comparison between the SJ state and that introduced in
\cite{DHP} to future work.)
The metric reads
\beq
 ds^2=a(\eta)^2[d\eta^2-d\vec{x}^2],
\eeq
where $a(\eta)$ and $\eta$ are the scale factor and conformal time, respectively.
A basis\footnote%
{See footnote
\ref{basisFN}
on the definition of ``basis''.} 
 $\{u_k\}$ for the Klein-Gordon solution space may be constructed as
\beq
 u_k(x)=\frac{e^{i\vec{k}\cdot\vec{x}}}{\sqrt{(2\pi)^3}}\frac{g_k(\eta)}{a(\eta)}
 \eeq where $k$ is the comoving Fourier wavenumber and $g_k(\eta)$
satisfies\footnote{$k=|\vec{k}|$ and $a'=\frac{da(\eta)}{d\eta}$.}
\beq
 \left(\partial_{\eta}^2+k^2+m^2a^2-\frac{a^{''}}{a}\right)g_k(\eta)=0
 \label{EOM2}
\eeq
\beq
g_k(\eta)\partial_{\eta}\bar{g}_{k}(\eta)-\bar{g}_k(\eta)\partial_{\eta}g_{k}(\eta)=i.
\label{WR2}
\eeq
Satisfying \eqref{EOM2} is eqivalent to satisfying the Klein-Gordan equation,
while the
normalization of the
Wronskian in \eqref{WR2} is equivalent to
$(u_k,u_{k'})_{KG}=\delta^3(\vec{k}-\vec{k'})$.
We have also
 $(u_k,\bar{u}_{k'})_{KG}=0$.
Moreover,
\bea
 <u_k,u_{k'}>&=&<g_k,g_k>_{\eta}\delta^3(\vec{k}-\vec{k'})\\
 <u_k,\bar{u}_{k'}>&=&<g_k,\bar{g}_{k}>_{\eta}\delta^3(\vec{k}+\vec{k'}),
\eea
where $<,>_{\eta}$ is defined by
\beq
 <f(\eta),g(\eta)>_{\eta}\equiv\int_0^{\Lambda}\bar{f}(\eta)g(\eta)a^2(\eta)\ud\eta.
\eeq
As usual, we have regulated the integral with a cutoff $\Lambda$, which
will be taken to infinity after the eigenfunctions of $i\Delta$ are found.
We can now use \eqref{VAC} to compute the spectrum of $i\Delta$:
\begin{align}
 \mathcal{U}_k^{SJ}(x)=
 \frac{e^{i\vec{k}\cdot\vec{x}}}{\sqrt{2(2\pi)^3}a(\eta)}\left\{\sqrt{1+\frac{1}{\sqrt{1-\frac{|<g_k,\bar{g}_k>_{\eta}|^2}{<g_k,g_k>_{\eta}^2}}}}g_k(\eta)-\sqrt{-1+\frac{1}{\sqrt{1-\frac{|<g_k,\bar{g}_k>_{\eta}|^2}{<g_k,g_k>_{\eta}^2}}}}e^{i\text{Arg}<\bar{g}_k,g_k>} \bar{g}_k(\eta)\right\}.
\label{VAC2}
\end{align}
\subsection{Massless field in the radiation era}
In the radiation era $a\propto\eta$.
When $m=0$, $g_k(\eta)=\frac{1}{\sqrt{2k}}e^{-ik\eta}$ satisfies
both \eqref{EOM2} and \eqref{WR2},
and
we have
also
\beq
\frac{|<g_k,\bar{g}_k>_{\eta}|}{<g_k,g_k>_{\eta}}=
\lim_{\Lambda\to\infty}\frac{|\int_0^{\Lambda}e^{2ik\eta}\eta^2\ud\eta|}{\int_0^{\Lambda}\eta^2\ud\eta}=0.
\eeq
Putting this back in \eqref{VAC2}, our prescription picks out the modefunctions:
\beq
\mathcal{U}_k^{SJ}(x)=\frac{1}{\sqrt{(2\pi)^32k}a(\eta)}e^{-i(k\eta-\vec{k}\cdot\vec{x})}.
\eeq
These are the so-called adiabatic-vacuum modefunctions
(for which an exact
expression
 exists in the case of a massless scalar field in a radiation dominated cosmos) \cite{BD}.

%
\subsection{Massive field in the radiation era}
\label{RadiationEra}
For a massive free scalar field in the radiation era, \eqref{EOM2} can still be solved analytically. Let
$z=i\tilde{m}\eta^2$, where $\tilde{m}=\alpha m$ and $\alpha$ is
a constant defined through $a(\eta)=\alpha\eta$. Furthermore, define a function $F$ by
$g_k(\eta)\equiv \frac{F(i\tilde{m}\eta^2)}{\sqrt{\eta}}$. With these
definitions, \eqref{EOM2} becomes
\beq
\frac{\partial^2F}{\partial z^2}+\big(-\frac{1}{4}-\frac{ik^2}{4\tilde{m}z}+\frac{3}{16 z^2}\big)F=0.
\eeq
This
equation
has two independent solutions $W_{\lambda,\mu}(z)$ and $W_{-\lambda,\mu}(-z)$
(called Whittaker functions)
with $\lambda=\frac{-ik^2}{4\tilde{m}}$ and $\mu=\frac{1}{4}$
(see e.g. \cite{Olver}).
In our case, these two functions are complex conjugates of one another.
Using the properties of Whittaker functions, \footnote{$\mathcal{W}\Big\{W_{k,\mu}(z),W_{-k,\mu}(e^{\pm i\pi}z)\Big\}=e^{\mp ik\pi}$
where
$\mathcal{W}$ is the Wronskian \cite{Olver}. }
it can be shown that $g_k$ satisfies the Wronskian condition
\eqref{WR2} with the normalization:
\beq
g_k(\eta)=\frac{e^{\frac{-\pi k^2}{8\tilde{m}}}}{\sqrt{2m}a(\eta)^{1/2}}W_{\lambda,\mu}(i\tilde{m}\eta^2).
\label{RM}
\eeq
\\
As we will soon show, $|g_k(\eta)|$ is constant for small $\eta$, which
means all inner products are
finite
in this
region.
Divergences
arise for large $\eta$,
though.
In this regime, $W_{\lambda,\mu}(z)\longrightarrow e^{-\frac{1}{2}z}z^{\lambda}$ and plugging this into \eqref{RM} we find:
\beq
g_k(\eta)\longrightarrow \frac{1}{\sqrt{2m}a(\eta)^{1/2}}e^{-\frac{i}{2}[\tilde{m}\eta^2+\frac{k^2}{2\tilde{m}}\ln(\tilde{m}\eta^2)]}.
\eeq
\\
Just as before,
it can be checked that $\lim_{\Lambda\to\infty}\frac{|<g_k,\bar{g}_k>_{\eta}|}{<g_k,g_k>_{\eta}}=0$,
\footnote{This is because $|g_k|^2a(\eta)^2$ diverges quadratically in $\eta$, while $g_k^2a(\eta)^2$ oscillates$\sim e^{-i\tilde{m}\eta^2}$.}
whence
our prescription picks out the modefunctions
\beq
\mathcal{U}_k^{SJ}(x)=\frac{e^{i\vec{k}\cdot\vec{x}}e^{\frac{-\pi k^2}{8\tilde{m}}}}{\sqrt{2m(2\pi)^3}a(\eta)^{3/2}}W_{\frac{-ik^2}{4\tilde{m}},\frac{1}{4}}(i\tilde{m}\eta^2).
\eeq
The corresponding two-point function is:
\beq
\langle SJ|\hat{\phi}(x)\hat{\phi}(x')|SJ\rangle=\int\frac{\ud^3\vec{k}}{(2\pi)^3}\frac{e^{i\vec{k}\cdot(\vec{x}-\vec{x}')-\frac{\pi k^2}{4\tilde{m}}}}{2m a(\eta)^{3/2}a(\eta')^{3/2}}
  W_{\frac{-ik^2}{4\tilde{m}},\frac{1}{4}}(i\tilde{m}\eta^2)W_{\frac{ik^2}{4\tilde{m}},\frac{1}{4}}(-i\tilde{m}\eta'^2) \ .
\eeq

It is reasonable to ask whether this vacuum state could potentially have observable effects.
One way of approaching this problem is to calculate the response rate of a comoving detector, such as
the Unruh-Dewitt detector, when the field is in the S-J vacuum state.
Even more ambitiously, one could (in principle) derive the S-J vacuum
for a general scale-factor $a(\eta)$ (with reasonable boundary conditions),
and study
its
 back-reaction on the underlying geometry
via the renormalized
stress-energy-momentum tensor.
These computations are fairly cumbersome and we defer a
detailed treatment to future studies.

In order to gain \textit{some} intuition, however,
we will compute
$\langle SJ|\hat{\rho}|SJ\rangle\equiv\langle SJ|\hat{T}^{0}_{\phantom{0}0}|SJ\rangle=a^{-2}\langle SJ|\hat{T}_{00}|SJ\rangle$,
where
the expectation value of the
(un-renormalized)
energy momentum tensor $\hat{T}_{\mu\nu}$ takes the form \cite{BD}:
\bea
 \langle SJ|\hat{T}_{\mu\nu}|SJ\rangle&=&\int\ud^3\vec{k}T_{\mu\nu}\left(\mathcal{U}_k^{SJ},\bar{\mathcal{U}}_k^{SJ}\right),\\
 T_{\mu\nu}(\phi,\psi)
 &\ideq&
 \nabla_{\mu}\phi\nabla_{\nu}\psi-\frac{1}{2}g_{\mu\nu}\left[\nabla^{\alpha}\phi\nabla_{\alpha}\psi-m^2\phi\psi\right].
\eea
It can be checked that
\bea
T_{00}\left(\mathcal{U}_k^{SJ},\bar{\mathcal{U}}_k^{SJ}\right)&=&\frac{1}{2}\left[\left|\partial_{\eta}\mathcal{U}_k^{SJ}\right|^2+\left|\vec{\nabla}\mathcal{U}_k^{SJ}\right|^2+m^2a^2\left|\mathcal{U}_k^{SJ}\right|^2\right]\\
&=&\frac{1}{(2\pi)^32a^2}\left[\left|g_k'-\frac{a'}{a}g_k\right|^2+(k^2+m^2a^2)|g_k|^2\right],
\label{tmunu}
\eea
where $'$ denotes differentiation with respect to $\eta$. Using the expression for $g_{k}(\eta)$ given by \eqref{RM}, it follows that
\beq
\langle SJ|\hat{\rho}|SJ\rangle=\int\frac{\ud^3\vec{p}}{(2\pi)^3}\sqrt{p^2+m^2}(n_{SJ}+\frac{1}{2}),
\eeq
where
\beq
n_{SJ}=\frac{1}{4}\frac{e^{-i\pi\lambda}}{\sqrt{\frac{p^2}{m^2}+1}}\left\{\left(\frac{p^2}{m^2}+1\right)\left|W_{\lambda,\mu}(z)\right|^2-\frac{1}{z^2}\left|2z\frac{dW_{\lambda,\mu}(z)}{dz}-\frac{3}{2}W_{\lambda,\mu}(z)\right|^2\right\}-\frac{1}{2}.
\eeq
The variables used above are defined as follows:
\beq
\vec{p}=\frac{\vec{k}}{a},\qquad
H=\frac{a'}{a^2}=\frac{1}{\alpha\eta^2},\qquad
z=i\tilde{m}\eta^2=i\frac{m}{H},\qquad
\lambda=\frac{-ik^2}{4\tilde{m}}=\frac{-ip^2}{4mH}.
\eeq
As before $\mu=\frac{1}{4}$, and $p=|\vec{p}|$.
In particular,
$H$ is the Hubble parameter and $\vec{p}$ is the physical momentum of the
Fourier mode with comoving wavenumber $\vec{k}$.

For a thermal bath of relativistic bosons at temperature $T$,
the energy density takes the form
$\rho = \int\frac{\ud^3\vec{p}}{(2\pi)^3} \, E \, n_{BE}$,
where
$E=\sqrt{p^2+m^2}$
and
$n_{BE}=\frac{1}{e^{E/T}-1}$
is the Bose-Einstein distribution.
This relation can be inverted to get
$T=\frac{E}{\ln\left(1+\frac{1}{n_{BE}}\right)}$.
In order to see how ``close to thermal''
our state
is, we similarly define
the
``effective temperature'' of a mode as
\beq
  T_{SJ}(p)=\frac{\sqrt{p^2+m^2}}{\ln\left(1+\frac{1}{n_{SJ}}\right)} \ .
\eeq
The more constant $T_{SJ}$ is as a function of $p$, the closer the
distribution $n_{SJ}$ is to being thermal.
Here we define
$n_{SJ}$ to include only excitations \textit{above}
the state $|GS\rangle$ that
minimizes the Hamiltonian at a particular instant of time,
and for which
$\langle GS|\hat{\rho}|GS\rangle=\int\frac{\ud^3\vec{p}}{(2\pi)^3}\frac{1}{2}\sqrt{p^2+m^2}$.
Then
$\langle SJ|\hat{\rho}|SJ\rangle-\langle GS|\hat{\rho}|GS\rangle = \int\frac{\ud^3\vec{p}}{(2\pi)^3}\sqrt{p^2+m^2} \, n_{SJ}$.

Figure \ref{fig2} shows the behaviour of $T_{SJ}$ for different ratios of $\frac{m}{H}$ and $\frac{p}{m}$.
\begin{figure}[h!]
\centering
\includegraphics[bb=25 25 595.3  241.9, width=1.5\textwidth]{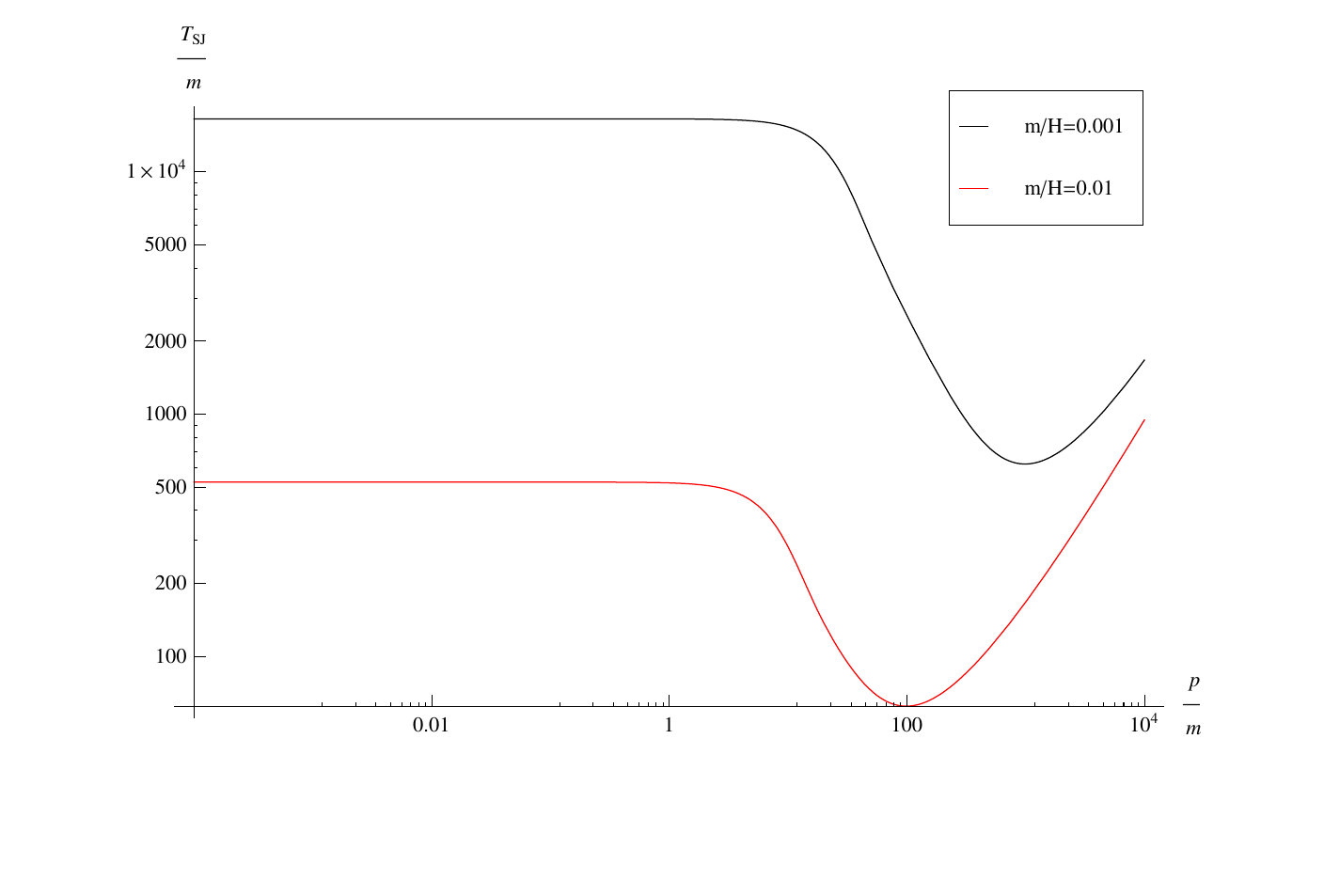}
\includegraphics[bb=15 25 595.3  241.9, width=1.6\textwidth]{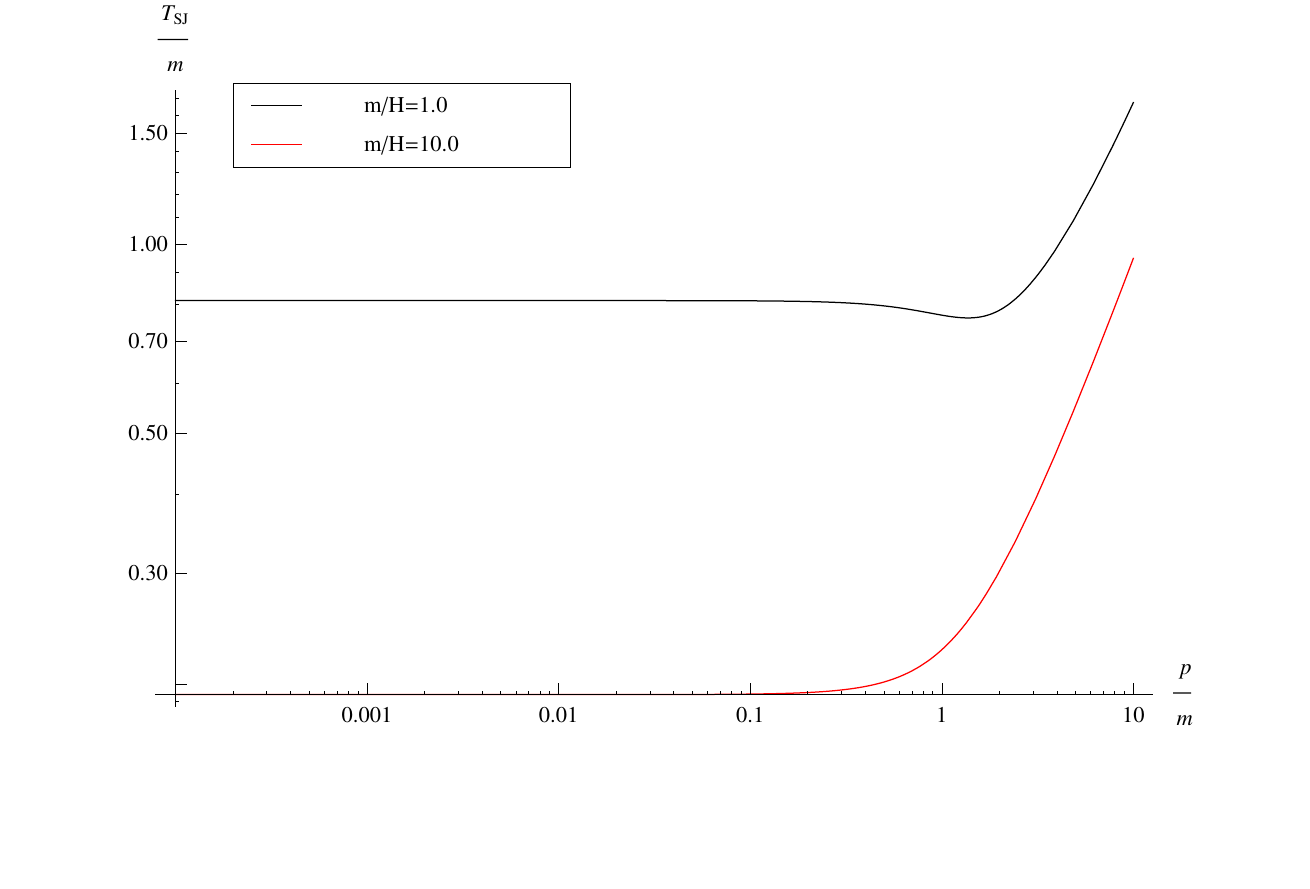}
\caption{Behaviour of $T_{SJ}$ for different ratios of $\frac{m}{H}$ and $\frac{p}{H}$.}
\label{fig2}
\end{figure}
It is evident that the long wavelength modes are in fact at a constant
effective
temperature.
For example,
in the regime where $m\ll H$ and $p\ll\sqrt{mH}$, the Whittaker function has a
simple asymptotic expansion
\beq
 W_{\lambda,\frac{1}{4}}(z)\longrightarrow\frac{\sqrt{\pi}}
 {\Gamma(\frac{3}{4}-\lambda)}z^{\frac{1}{4}}-\frac{2\sqrt{\pi}}{\Gamma(\frac{1}{4}-\lambda)}z^{3/2}+\mathcal{O}(z^{5/4}),
\eeq
using which it can be shown that
\beq
 T_{SJ}(p)\sim\frac{\pi}{4\left|\Gamma(3/4)\right|^2}\frac{H^{3/2}}{\sqrt{m}}.
\eeq
This result suggests that there are correlations on super-horizon
scales.
It is noteworthy that these correlations have appeared without the help
of any previous epoch of accelerated expansion.  Potentially, they could
therefore open up a new perspective on the question of primordial
fluctuations and on the related puzzle sometimes called ``horizon problem''.

\section{Causality and the S-J Vacuum}
Like other vacuua, the S-J vacuum is defined globally, and it
depends on both the causal past
\textit{and} future of the spacetime.
Consider for example a spacetime
which is first static, then expands for a short time, then goes back to
being static again.
\footnote{Of course, such a spacetime is not necessarily a solution to the Einstein equations.}
In light of the inherent time-reversal symmetry of
the conditions defining our vacuum-state, it is clear that this state
can agree neither with the early-time vacuum
(the state of minimum energy at that time),
nor with the late-time vacuum.  Rather, it must strike some sort of
``compromise'' between them.

In the present section we will illustrate this behavior with a simple
example, but before doing so, we would like to dwell for a bit on the
question of whether one should interpret this type of dependence on the
future as a failure of causality.
%
By construction, our definition of the vacuum depends on the full
spacetime geometry.  That it thereby fails to be what John Bell called
``locally causal'' is no surprise because, as is well understood by now,
any reasonable quantum state must incorporate nonlocal correlations and
entanglement.  Certainly the Minkowski vacuum does so.  But does this
type of nonlocality also imply genuine acausality?

The prior question that begs for an answer here is what is meant by
acausality in the context of quantum field theory, considering also that
quantum field theory must ultimately find its place within a theory of
full quantum gravity.  If we remain within the ``operationalist''
framework of external agents, ``measurements'' and state-vector
collapse, then causality (in the sense of relativistic causality)
reduces to the impossibility of superluminal signalling.  In this sense,
there is no question of acausality as long as the twin conditions of
spacelike commutativity and hyperbolicity of the field equations are
respected, which by construction they are in the field theory we are
working with in this paper.\footnote%
{The theory of \cite{Johnston} retains spacelike commutativity, but
 hyperbolicity becomes, together with the notion of field-equation
 itself, approximate at best.}
On the other hand, if we try to adopt a more ``objective'' framework
which dispenses with external agents, then we seem to be left without
any clear definition of relativistic causality at all.  That is, we lack
an {\it{}intrinsic} criterion which could decide whether or not physical
influences are propagating outside the light cone or ``into the past''.
But without such a criterion, the meaning of relativistic causality in
general is called into question.

%
%
A further observation also seems relevant here, even if it does not turn
out to be decisive.  Namely, the assumption we have made of a fixed,
non-dynamical spacetime is already ``anticausal'' in a certain sense.
In a full quantum gravity theory the future geometry must evolve
together with, and in mutual dependence on the future matter-field.
Hence, any
attempt to specify the geometry in advance amounts to imposing a future
boundary condition on the combined system of metric plus scalar field.
Given this, it would not be surprising if a correct semiclassical
treatment of the scalar were also to involve some degree of ``dependence on the
future''.

The specific model we will consider is a 1+1 dimensional
FLRW universe with
metric $ds^2=C(\eta)(d\eta^2-dx^2)$, where
$C(\eta)=A+B\tanh(\rho\eta)$. In the infinite past $C(\eta)\to A-B$ and
in the infinite future $C(\eta)\to A+B$.  It
is
known that there
are normalized modes $u_k^{in}(\eta,x)$ that behave like
positive
frequency Minkowski-space modes in the remote past ($\eta,t\to-\infty$):
\footnote{See section 3.4 of \cite{BD}.}
\bea
u_k^{in}(\eta,x)&=&\frac{1}{\sqrt{4\pi\omega^{in}_k}}e^{ikx-i\omega^{+}_k\eta-(i\omega^{-}_k/\rho)\ln[2\cosh(\rho\eta)]} \notag\\
&\times& {}_2F_1(1+(i\omega^{-}_k/\rho),i\omega^{-}_k/\rho;1-(i\omega^{in}_k/\rho);\frac{1}{2}(1+\tanh(\rho\eta)))\notag\\
&\stackrel{\eta\to-\infty}{\longrightarrow}&\frac{1}{\sqrt{4\pi\omega^{in}_k}}e^{ikx-i\omega^{in}_k\eta},
\eea
where ${}_2F_1$ is the ordinary hypergeometric function and
\bea
\omega^{in}_k&=&[k^2+m^2(A-B)]^{1/2}\notag\\
\omega^{out}_k&=&[k^2+m^2(A+B)]^{1/2}\notag\\
\omega^{\pm}_k&=&\frac{1}{2}(\omega^{out}_k\pm\omega^{in}_k).
\eea
Similarly, there are normalized modes $u_k^{out}(\eta,x)$ that behave
like the positive frequency Minkowski-space modes in the remote future:
($\eta,t\to\infty$)
\bea
u_k^{out}(\eta,x)&=&\frac{1}{\sqrt{4\pi\omega^{out}_k}}e^{ikx-i\omega^{+}_k\eta-(i\omega^{-}_k/\rho)\ln[2\cosh(\rho\eta)]} \notag\\
&\times& {}_2F_1(1+(i\omega^{-}_k/\rho),i\omega^{-}_k/\rho;1+(i\omega^{out}_k/\rho);\frac{1}{2}(1-\tanh(\rho\eta)))\notag\\
&\stackrel{\eta\to\infty}{\longrightarrow}&\frac{1}{\sqrt{4\pi\omega^{out}_k}}e^{ikx-i\omega^{out}_k\eta}.
\eea
The in and out modes are related to eachother by the following Bogolubov transformation
\beq
u_k^{in}(\eta,x)=\alpha_ku_k^{out}(\eta,x)+\beta_k\bar{u}_{-k}^{out},
\eeq where
\bea
\alpha_k&=&(\frac{\omega^{out}_k}{\omega^{in}_k})^{1/2}\frac{\Gamma(1-(i\omega^{in}_k/\rho))\Gamma(-(i\omega^{out}_k/\rho))}{\Gamma(1-(i\omega^{+}_k/\rho))\Gamma(-i\omega^{+}_k/\rho)}\\
\beta_k&=&(\frac{\omega^{out}_k}{\omega^{in}_k})^{1/2}\frac{\Gamma(1-(i\omega^{in}_k/\rho))\Gamma(-(i\omega^{out}_k/\rho))}{\Gamma(1+(i\omega^{-}_k/\rho))\Gamma(i\omega^{-}_k/\rho)}.
\eea
The modes
$u^{in}_k$ and $u^{out}_k$
define
 vacuum states at early and late times, respectively. If the
system is at first ($\eta\to-\infty$) in the in-vacuum state, i.e. the
no particle state, it will have $|\beta_k|^2$ particles of momentum $k$
with respect to the out-vacuum after the expansion ($\eta\to\infty$).
The S-J vacuum has a different nature, simply because the
vacuum state in the $\eta\to-\infty$ region depends on what happens in
the infinite future (and vice-versa).
We can find the S-J vacuum by substituting the modefunctions $u_k^{in}(\eta,x)$
in formula \eqref{VAC}. Defining $g_k(\eta)$ through
$u_k^{in}(\eta,x)=\frac{e^{ikx}}{\sqrt{2\pi}}g_k(\eta)$,
it can be easily verified that
\bea
<u_k^{in},u_{k'}^{in}>&=&\delta(k-k')\int_{-\Lambda}^{\Lambda}\bar{g}_{k}(\eta)g_{k'}(\eta)C(\eta)\ud\eta\\
<\bar{u}_k^{in},u_{k'}^{in}>&=&\delta(k+k')\int_{-\Lambda}^{\Lambda}g_{k}(\eta)g_{k'}(\eta)C(\eta)\ud\eta,
\eea where as usual, we have regulated the integrals with a cutoff $\Lambda$. The asymptotic behaviour of $g_k(\eta)$ is given by
\bea
g_k(\eta)&\stackrel{\eta\to-\infty}{\longrightarrow}&\frac{1}{\sqrt{2\omega^{in}_k}}e^{-i\omega^{in}_k\eta}\\
g_k(\eta)&\stackrel{\eta\to\infty}{\longrightarrow}&\frac{1}{\sqrt{2\omega^{out}_k}}(\alpha_ke^{-i\omega^{out}_k\eta}+\beta_ke^{i\omega^{out}_k\eta}).
\eea
Using these expressions, it can be checked that
\beq
\lim_{\Lambda\to\infty}\frac{<\bar{u}_{-k}^{in},u_{k}^{in}>}{<u_k^{in},u_{k}^{in}>}\longrightarrow\frac{2\alpha_k\beta_k}{|\alpha_k|^2+|\beta_k|^2+\frac{\omega_{out}}{\omega_{in}}\frac{A-B}{A+B}}\equiv\gamma_k,
\eeq from which the S-J vacuum can be computed using \eqref{VAC}:
\bea
u^{SJ}_k(\eta,x)&=&\mu_ku^{in}_{k}(\eta,x)+\xi_k\bar{u}^{in}_{-k}(\eta,x)\\
\mu_k&=&\frac{1}{\sqrt{2}}\left[1+\frac{1}{\sqrt{1-|\gamma_k|^2}}\right]^{\frac{1}{2}}\\
\xi_k&=&-\frac{1}{\sqrt{2}}\left[-1+\frac{1}{\sqrt{1-|\gamma_k|^2}}\right]^{\frac{1}{2}}e^{i\text{Arg}(\alpha_k\beta_k)}.
\label{deviation}
\eea
Fig. \ref{fig1} shows the difference between the S-J and ``in" vacuua
for a specific set of parameters and frequencies.
As
one would expect,
this deviation is only significant for low-frequency modes, which are more sensitive to the rate of expansion $\rho$.

\begin{figure}[h!]
\centering
\includegraphics[bb=0 25 595.3  400.0, width=0.9\textwidth]{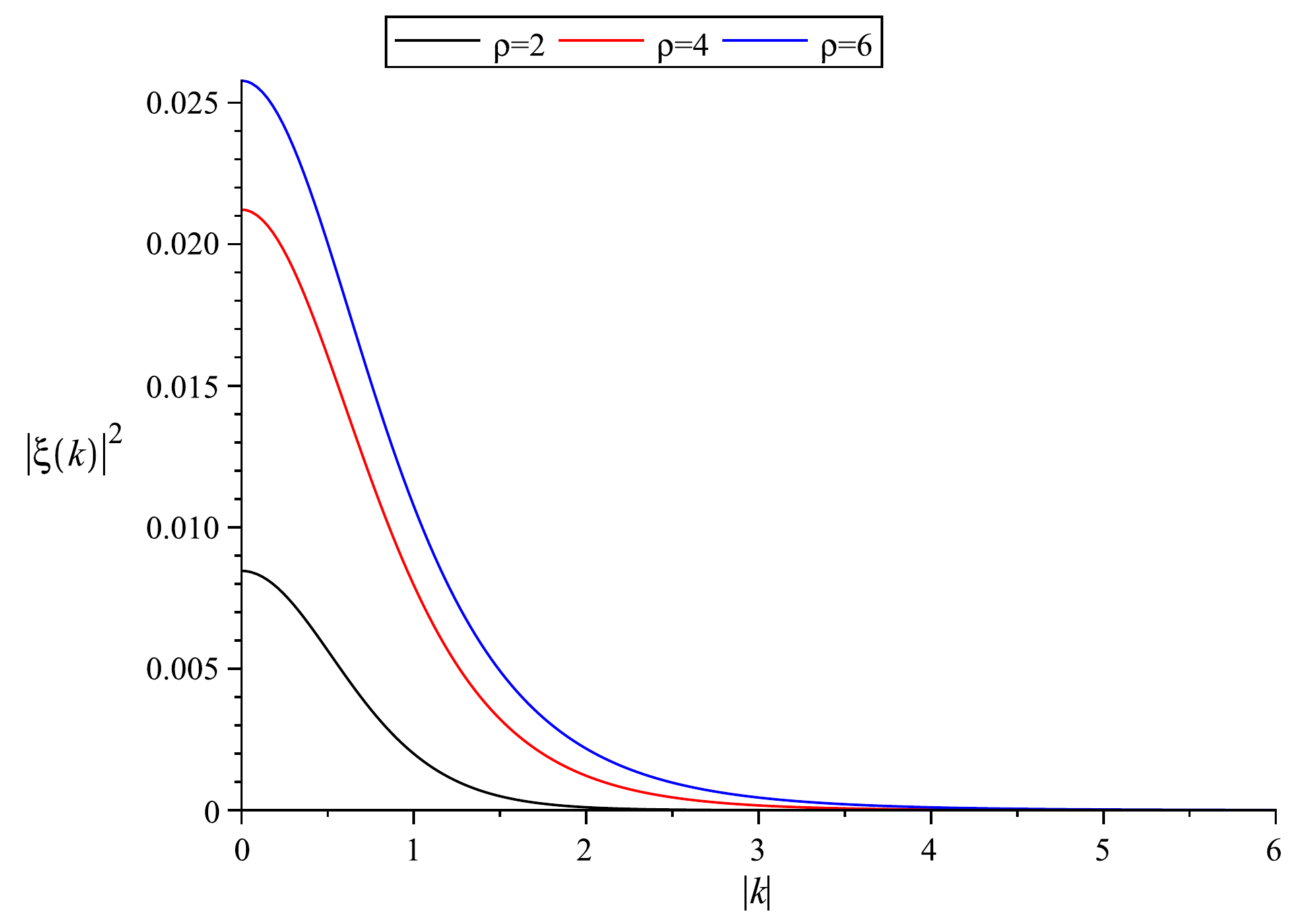}
\caption{The deviation between the S-J and ``in" vacuua, as
measured
by $|\xi(k)|^2$, defined in \eqref{deviation}.
The parameters used here are $A=2.0$, $B=1.0$, and $m=1.0$.}
\label{fig1}
\end{figure}
\section{Conclusions and Discussions}
We have defined a distinguished vacuum for a free quantum field in a
globally hyperbolic region of an arbitrarily curved spacetime.  This
``S-J'' state is well-defined for all compact regions and for a large
class of noncompact ones.

We have shown that for static spacetimes, our vacuum
coincides with the usual ground state.
We have also computed it explicitly for a scalar field of mass $m\ge 0$
in a radiation-filled, spatially flat, homogeneous and isotropic cosmos.
In that connection we also computed an ``effective temperature that can
be defined for the super-horizon modes of the massive field.  The
correlations found thereby could open up a new perspective on the
question of primordial fluctuations and the so-called ``horizon
problem".

A peculiar aspect of our prescription is its temporal non-locality.
We demonstrated this feature by the example of a spacetime which sandwiches a
region with curvature in-between flat initial and final regions,
but we did not explore its phenomenological implications any further.
In a parallel effort \cite{deSitter}, we have
also
applied
our prescription to de Sitter space, obtaining results for both
the full spacetime and for the so-called
Poincar{\'e} patch.
The vacua obtained thereby differ from the Euclidean (Bunch-Davies)
vacuum below a certain mass threshold, with potentially
interesting phenomenology.

A question that we have not addressed in this paper is whether, or in
which circumstances, the S-J vacuum obeys the so-called Hadamard
condition. Since this work was completed, some results  have appeared
\cite{FV} showing that the answer is yes in some cases and no
in others.  
We hope to return to this and related matters elsewhere.

\acknowledgments

We would like to thank Yasaman K. Yazdi, Fay Dowker, Michel Buck, and Bill Unruh
for useful discussions and comments throughout the
course of this project.  We are supported by the University of Waterloo and the
Perimeter Institute for Theoretical Physics.  Research at the Perimeter
Institute is supported by the Government of Canada through Industry
Canada and by the Province of Ontario through the Ministry of Research
\& Innovation.

\appendix
\section{Comments and Calculations}
\label{comments}
\subsection{When is $i\Delta$ self-adjoint?}
\label{com1}
We saw in the main text that $i\Delta$ is self-adjoint and has a
complete set of normalizable eigenvectors
when $i\Delta(x,y)\in L^2(M\times M)$.
In 3+1 dimensions, this Hilbert-Schmidt condition is not satisfied,
however, because the retarded Green's function is a distribution, and
not a function.
Nonetheless, it is
still possible to show that $i\Delta$ is self-adjoint
when $M$ is bounded.
Here, we will
show that this is indeed the case for a bounded region $M$ of Minkowski
space, and will argue that this conclusion should continue to hold in all
curved spacetimes.

In $3+1$ dimensional
Minkowski space ($\mathbb{M}^4$),
we have $G_R(x,y)=G(x-y)$ with $G(x)$ given by
\beq
G(x)=\theta(t)\theta(s^2)\Big(\frac{\delta(s^2)}{2\pi}-\frac{m}{4\pi s}J_1(ms)\Big),
\label{GrF}
\eeq
where $s=\sqrt{t^2-\vec{x}^2}$ for $t^2\ge\vec{x}^2$, and
$s=-i\sqrt{\vec{x}^2-t^2}$ for $t^2\le\vec{x}^2$. $J_1$ is a Bessel
function of
the
first kind and $m$ is the mass of the scalar field. We will
now show that $i\Delta$ is a bounded operator on $\Co$, i.e. we will
prove that there exists $N>0$ such that $||i\Delta f||\le N||f||$ for
all $f\in\Co$, where $||\cdot||$ is the $L^2(\mathbb{M}^4)$
norm. \footnote{$||\psi||=\sqrt{<\psi,\psi>}$.}
Let $G_R f=\int_MG_R(x,y)f(y)\ud V_y$ and $G_A f=\int_MG_A(x,y)f(y)\ud
V_y$ denote the retarded and advanced solutions of the Klein-Gordon
equation with source $f$, respectively. It is enough to show that $G_R$
and $G_A$ are bounded because $||i\Delta f||=||iG_R f-iG_A f||\le||G_R
f||+||G_A f||$.

Let $A(x)=\theta(t)\theta(s^2)\frac{\delta(s^2)}{2\pi}$ and
$B(x)=\theta(t)\theta(s^2)\frac{m}{4\pi s}J_1(ms)$ so that
$G(x)=A(x)-B(x)$.  Then: $||G_R
f||=||Af-Bf||\le||Af||+||Bf||\le||Af||+\frac{m^2V}{8\pi}||f||$, where V
is the total spacetime volume. \footnote{B is Hilbert-Schmidt because
 $\int_M\int_M|B(x-y)|^2\ud V_x\ud V_y\le(\frac{m^2V}{8\pi})^2$, since
 $\frac{J_1(ms)}{s}$ peaks at $\frac{m}{2}$.}  Therefore, $G_R$ is
bounded if and only if $A$ is bounded. Consider a bounded region of
$\mathbb{M}^4$ where $\vec{x}\in[\vec{x}_{min},\vec{x}_{max}]$ and
$t\in[t_{min},t_{max}]$, and a smooth function of compact support
$f(t,\vec{x})$ on this region.
It can be shown that:
\beq
(Af)(t,\vec{x})=\int_{\vec{x}_{min}}^{\vec{x}_{max}}\frac{f(t-|\vec{x}-\vec{y}|,\vec{y})}{4\pi|\vec{x}-\vec{y}|}\ud^3\vec{y}.
\eeq It then follows from the Cauchy-Schwarz inequality that
\beq
|(Af)(t,\vec{x})|^2\le\frac{1}{16\pi^2}
\Big\{\int_{\vec{x}_{min}}^{\vec{x}_{max}}\frac{\ud^3\vec{y}}{|\vec{x}-\vec{y}|^2}\Big\}\times
\Big\{\int_{\vec{x}_{min}}^{\vec{x}_{max}}|f(t-|\vec{x}-\vec{y}|,\vec{y})|^2\ud^3\vec{y}\Big\}.
\eeq
Also,
\beq
\int_{\vec{x}_{min}}^{\vec{x}_{max}}\frac{\ud^3\vec{y}}{|\vec{x}-\vec{y}|^2}
=\int_{\vec{x}_{min}+\vec{x}}^{\vec{x}_{max}+\vec{x}}\frac{\ud^3\vec{y}}{|\vec{y}|^2}\le
\int_{2\vec{x}_{min}}^{2\vec{x}_{max}}\frac{\ud^3\vec{y}}{|\vec{y}|^2}\equiv C_0,
\eeq where $C_0$ is some finite positive number. So we have that
\beq
|(Af)(t,\vec{x})|^2\le\frac{C_0}{16\pi^2}
\Big\{\int_{\vec{x}_{min}}^{\vec{x}_{max}}|f(t-|\vec{x}-\vec{y}|,\vec{y})|^2\ud^3\vec{y}\Big\}.
\eeq
It then follows that
\bea
||Af||^2&=&\int_{\vec{x}_{min}}^{\vec{x}_{max}}\ud^3\vec{x}\int_{t_{min}}^{t_{max}}\ud t |(Af)(t,\vec{x})|^2 \\
&\le& \frac{C_0}{16\pi^2}\int_{\vec{x}_{min}}^{\vec{x}_{max}}\ud^3\vec{x}\int_{\vec{x}_{min}}^{\vec{x}_{max}}\ud^3\vec{y}
\int_{t_{min}}^{t_{max}}\ud t |f(t-|\vec{x}-\vec{y}|,\vec{y})|^2\\
&=& \frac{C_0}{16\pi^2}\int_{\vec{x}_{min}}^{\vec{x}_{max}}\ud^3\vec{x}\int_{\vec{x}_{min}}^{\vec{x}_{max}}\ud^3\vec{y}
\int_{t_{min}+|\vec{x}-\vec{y}|}^{t_{max}+|\vec{x}-\vec{y}|}\ud t |f(t,\vec{y})|^2.
\eea
Since $f(t,\vec{y})$ vanishes for $t>t_{max}$:
$\int_{t_{min}+|\vec{x}-\vec{y}|}^{t_{max}+|\vec{x}-\vec{y}|}\ud t |f(t,\vec{y})|^2
\le\int_{t_{min}}^{t_{max}}\ud t |f(t,\vec{y})|^2$, from which it follows that
\beq
||Af||^2\le \frac{C_0V_s}{16\pi^2}||f||^2,
\eeq where $V_s=\int_{\vec{x}_{min}}^{\vec{x}_{max}}\ud^3\vec{x}$ is the enclosed spatial volume.
A similar analysis goes through for the advanced solution $G_Af$ which results in the same bound. At long last:
\beq
||i\Delta f||\le \frac{V}{4\pi}\Big(m^2+2\sqrt{\frac{C_0V_s}{V^2}}\Big)||f||.
\eeq
Because $i\Delta$ is bounded on $\Co$ and $\Co$ is a dense subspace of
$L^2({M})$, $i\Delta$ can be uniquely extended to
$L^2({M})$ as a bounded operator \cite{RS}. It then follows
that $i\Delta$ is self-adjoint because it is Hermitian on all of
$L^2(M)$ \cite{RS}.


It should be clear that the boundedness of $i\Delta$ has everything to
do with the singularity structure of $i\Delta(x,y)$ (once we restrict
ourselves to bounded spacetimes). It is not terribly unrealistic to
assume that this singularity structure remains (more or less) the same
in curved spacetimes. This is certainly true in the coincidence limit,
if the equivalence principle is respected. Based on these arguments, we
assume
in this paper
 that $i\Delta$ is a self-adjoint operator on $L^2(M)$, for all
bounded globally-hyperbolic spacetimes $M$.

\subsection{The S-J Vacuum and the Simple Harmonic Oscillator}
\label{com3}
This simple example
illustrates
the technical difficulties one faces when diagonalizing $i\Delta$, and how they can be resolved.
Consider a simple harmonic oscillator
with unit mass and frequency $\omega$, whose position $q(t)$ satisfies
$(\frac{d^2}{dt^2}+\omega^2)q(t)=0$.
The associated retarded Green's function satisfies
$(\frac{d^2}{dt^2}+\omega^2)G_R(t,t')=-\delta(t-t')$,
with $G_R(t,t')=0$ for $t<t'$. The solution
to this equation is $G_R(t,t')=-\theta(t-t')\frac{\sin[\omega(t-t')]}{\omega}$, which in turn gives
\bea
\Delta(t,t')&=&G_R(t,t')-G_R(t',t)=-\frac{\sin[\omega(t-t')]}{\omega} \\
&=&\frac{1}{2i\omega}\big[e^{-i\omega(t-t')}-e^{i\omega(t-t')}\big].
\eea
Taking $t\in(-\infty,\infty)$, it may be verified that,
\footnote{Here, as always, $(\Delta f)(t)=\int\Delta(t,t')f(t')\ud t'$.}
formally,
\beq
i\Delta e^{\pm i\omega t}=\frac{\mp\delta(0)}{2\omega}e^{\pm i\omega t}.
\eeq
Keeping the $\delta(0)$'s around,
we see that
$T_{\mp}(t)=\frac{e^{\pm i\omega t}}{\sqrt{\delta(0)}}$ are orthonormal
eigenfunctions of $i\Delta$ with eigenvalues
$\lambda_{\pm}=\pm\frac{\delta(0)}{2\omega}$.  According to our
prescription, the resulting positive frequency modefunction
$\mathcal{U}^{SJ}(t)=\sqrt{\lambda_{+}}T_{+}(t)=\frac{e^{-i\omega
  t}}{\sqrt{2\omega}}$, which is completely well defined and gives the
right vacuum state:
the state $|0>$ annihilated by $\hat{a}$ which
multiplies $\mathcal{U}^{SJ}(t)$ in the position operator expansion
$\hat{q}(t)=\mathcal{U}^{SJ}(t)\hat{a}+\bar{\mathcal{U}}^{SJ}(t)\hat{a}^\dagger$
is in fact the minimum energy state of the Hamiltonian.
Thus
the infinities appearing in the spectrum of $i\Delta$ end up being
harmless.
In other words, the S-J vacuum state
should
not
depend on how $\delta(0)$ is regularized.

One regularization scheme, for example, is to first restrict
to $t\in[-T,T]$, diagonalize $i\Delta$, and then take the limit
$T\to\infty$ once the spectrum of $i\Delta$
has been
computed.  Let
$u(t)=\frac{e^{-i\omega t}}{\sqrt{2\omega}}$ so that
$i\Delta(t,t')=u(t)\bar{u}(t')-u(t)\bar{u}(t')$. Finding the spectrum of
$i\Delta$ in this case is similar to diagonalizing a two-by-two matrix.
It may be confirmed that there are two eigenvalues $\lambda$ and
$-\lambda$ with corresponding eigenfunctions $T_{+}$ and $\bar{T}_{+}$,
where
\beq
 T_{+}=\sqrt{\frac{<u,u>+\lambda}{2\lambda^2}}u(t)-\sqrt{\frac{<u,u>-\lambda}{2\lambda^2}}e^{-i\text{arg}(<u,\bar{u}>)}\bar{u}(t)
\eeq
\beq
 \lambda=\sqrt{{<u,u>}^2-|<u,\bar{u}>|^2}.
\eeq
These expressions are completely well-defined because all
the
inner products are finite.
The corresponding positive frequency modefunction as dictated by our prescription is then
\beq
 \mathcal{U}^{SJ}(t)=\sqrt{\lambda}T_{+}(t)=\frac{1}{\sqrt{2}}\left\{\sqrt{1+\frac{1}{\sqrt{1-\frac{|<u,\bar{u}>|^2}{<u,u>^2}}}} u(t)-
 \sqrt{\frac{1}{\sqrt{1-\frac{|<u,\bar{u}>|^2}{<u,u>^2}}}-1}e^{-i\text{arg}(<u,\bar{u}>)}\bar{u}(t)\right\}.
 \eeq
In
the limit
 $T\to\infty$, the ratio
$\frac{|<u,\bar{u}>|^2}{<u,u>^2}\to0$ and we recover
$\mathcal{U}^{SJ}(t)=\frac{e^{-i\omega t}}{\sqrt{2\omega}}$.


\subsection{Equation \eqref{E1} as an equality between distributions}
\label{com2}
Let us
show that
the
right and left hand sides
of \eqref{E1}
are equal if they are integrated
against
a smooth
test function $f\in\Co$.

Since $\Delta f\in\Sol$ for any $f\in\Co$, we
can expand out $i\Delta f$ in terms of
the $u_k$:
$i\Delta f(x)=\sum_k\alpha_ku_k(x)+\beta_k\overline{u}_k(x)$,
where $\alpha_k$'s
and $\beta_k$'s are constants. It can be verified that
$\alpha_k=(u_k,i\Delta f)_{KG}$ and $\beta_k=-(\overline{u}_k,i\Delta f)_{KG}$.
Then,
\bea
 i\Delta f(x)&=&\sum_k(u_k,i\Delta f)_{KG}u_k(x)-(\overline{u}_k,i\Delta f)_{KG}\overline{u}_k(x)\\
 &=& \sum_k\langle u_k,f\rangle u_k(x)-\langle\overline{u}_k,f\rangle\overline{u}_k(x) \\
 &=& \int_M\left[\sum_{k}u_k(x)\overline{u}_k(y)-\overline{u}_k(x)u_k(y)\right]f(y)\ud V_y. 
\eea

\subsection{Another Slant on the SJ Prescription}
\label{com4}
The following simple but useful result helps put our construction in context.
\begin{thm}
Let $\SolC$ denote the set of all complex solutions of the Klein-Gordon equation.
Assume there are functions $u_i(x)\in\SolC$, which together with their complex conjugates span $\SolC$ and
satisfy
\beq
\label{IPS1}
 (u_n,u_m)_{KG}=\delta_{nm},  \text{ }
 (\bar{u}_n,\bar{u}_m)_{KG}=-\delta_{nm}, \text{ }
 (u_n,\bar{u}_m)_{KG}=0
\eeq
\beq
\label{IPS2}
\frac{<u_n,u_m>}{<u_n,u_n>}=\delta_{nm},\text{ } <u_n,\bar{u}_m>=0,
\eeq
where $<,>:\SolC\times\SolC\to\mathbb{C}$
%
%
is the $L^2$ inner product defined in \eqref{L2}. (See Section \ref{Background} for the definition of $(,)_{KG}$.)
Then, any other
functions $g_n(x)\in\SolC$ that also satisfy
these conditions are unitarily-related to the
$u_i(x)$
(i.e. $g_n(x)=\sum_k\alpha_{nk}u_k(x)$ where
$\sum_k\bar{\alpha}_{nk}\alpha_{mk}=\delta_{nm}$).
\end{thm}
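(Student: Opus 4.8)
The plan is to notice that conditions \eqref{IPS1}--\eqref{IPS2} say precisely that the $u_i$ are $(,)_{KG}$--normalized eigenfunctions of the integral operator $i\Delta$ of \eqref{idelta} lying in its \emph{positive} spectral subspace, so that both $\mathrm{span}\{u_k\}$ and $\mathrm{span}\{g_k\}$ are forced to coincide with that same subspace; the claimed unitarity of the transformation relating the two bases is then automatic.

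First I would note that, since the $u_k$ obey \eqref{IPS1} and together with their conjugates span $\SolC$, they constitute a ``basis'' in the sense of footnote~\ref{basisFN}, so \eqref{E1} applies: $i\Delta(x,y)=\sum_k u_k(x)\overline{u}_k(y)-\overline{u}_k(x)u_k(y)$. Substituting this into \eqref{idelta} and using \eqref{IPS2} --- together with its consequences $\langle\overline{u}_k,\overline{u}_m\rangle=\overline{\langle u_k,u_m\rangle}=\langle u_k,u_k\rangle\,\delta_{km}$ and $\langle\overline{u}_k,u_m\rangle=\overline{\langle u_m,\overline{u}_k\rangle}=0$, which follow at once from the form $\langle f,g\rangle=\int\overline{f}g\,\ud V$ --- a two-line computation gives $i\Delta\,u_m=N_m u_m$ and $i\Delta\,\overline{u}_m=-N_m\overline{u}_m$, where $N_m:=\langle u_m,u_m\rangle>0$ (positive, being the $L^2$ norm-squared of the nonzero function $u_m$). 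Hence $\mathrm{span}\{u_k\}$ lies in the strictly positive spectral subspace of $i\Delta$ and $\mathrm{span}\{\overline{u}_k\}$ in the strictly negative one. Since the $g_n$ satisfy the identical hypotheses, the same argument yields $i\Delta\,g_n=M_n g_n$ with $M_n:=\langle g_n,g_n\rangle>0$.

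The final step is to expand a given $g_n$ in the basis $\{u_k,\overline{u}_k\}$, writing $g_n=\sum_k\alpha_{nk}u_k+\beta_{nk}\overline{u}_k$, apply $i\Delta$ using the eigenvalue relations just found to get $i\Delta\,g_n=\sum_k N_k\alpha_{nk}u_k-\sum_k N_k\beta_{nk}\overline{u}_k$, and compare with $M_n g_n=\sum_k M_n\alpha_{nk}u_k+\sum_k M_n\beta_{nk}\overline{u}_k$. Reading off the coefficient of each $u_k$ and each $\overline{u}_k$ --- legitimate because the $u_k,\overline{u}_k$ are mutually $(,)_{KG}$--orthogonal, hence independent, so one simply pairs the relation with $u_j$ and with $\overline{u}_j$ --- gives $(N_k-M_n)\alpha_{nk}=0$ and $(N_k+M_n)\beta_{nk}=0$ for all $k$; since $N_k,M_n>0$ the second relation forces $\beta_{nk}=0$, so $g_n=\sum_k\alpha_{nk}u_k$. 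Imposing $(g_n,g_m)_{KG}=\delta_{nm}$ from \eqref{IPS1} then yields $\sum_k\overline{\alpha}_{nk}\alpha_{mk}=\delta_{nm}$, the asserted unitary relation. The only genuine obstacle is the familiar functional-analytic one: justifying that $i\Delta$ may be applied term-by-term to these (in general only distributional) mode expansions and that coefficients can be extracted; following the attitude taken elsewhere in the paper I would flag this and remark that on bounded globally hyperbolic regions, where $i\Delta$ is a bona fide self-adjoint operator (see section~\ref{com1}), no such difficulty arises.
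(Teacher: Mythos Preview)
Your proof is correct, but it takes a genuinely different route from the paper's.  The paper never invokes $i\Delta$ as an operator inside the proof; instead it computes $\beta_{nk}$ in two ways.  Writing $g_n=\sum_k\alpha_{nk}u_k+\beta_{nk}\bar u_k$, the $L^2$ conditions \eqref{IPS2} for the $u$'s give $\beta_{nk}=\langle\bar u_k,g_n\rangle/\langle u_k,u_k\rangle$; inverting the expansion and using \eqref{IPS2} for the $g$'s gives $\beta_{nk}=-\langle\bar g_n,u_k\rangle/\langle g_n,g_n\rangle$.  Since $\langle\bar u_k,g_n\rangle=\langle\bar g_n,u_k\rangle$ and both denominators are positive, $\beta_{nk}=0$ follows; unitarity then drops out of \eqref{IPS1}.

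Your argument instead makes explicit what the paper remarks only \emph{after} its proof, namely that \eqref{IPS2} forces the $u_k$ (and the $g_k$) to be eigenfunctions of $i\Delta$ with strictly positive eigenvalues; the vanishing of $\beta_{nk}$ is then the statement that the positive and negative spectral subspaces of a Hermitian operator do not mix.  This is more conceptual and ties the theorem directly to the S--J prescription (the span of either family \emph{is} the positive part of $i\Delta$).  The paper's route, by contrast, is pure inner-product algebra and, as the authors note, goes through for any Hermitian pairing $\langle\,,\,\rangle$ with $\langle\bar f,g\rangle=\langle\bar g,f\rangle$, not just the $L^2$ one --- so it is slightly more general.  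The convergence caveat you raise is fair, but the paper's derivation of the two expressions for $\beta_{nk}$ tacitly interchanges sums and inner products as well, so neither proof is more rigorous on that point.
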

\begin{proof}
Suppose both $u_n(x)$ and $g_n(x)$ satisfy the above conditions and let
\beq
g_n(x)=\sum_k\alpha_{nk}u_k(x)+\beta_{nk}\bar{u}_k(x).
\label{g-u}
\eeq
Requiring
$(g_n,g_m)_{KG}=\delta_{nm}$
implies
\beq
\sum_k\bar{\alpha}_{nk}\alpha_{mk}-\bar{\beta}_{nk}\beta_{mk}=\delta_{nm} \label{temp3}.
\eeq
Also, it follows from \eqref{g-u} and \eqref{IPS1} that
\beq
\alpha_{nk}=(u_k,g_n)_{KG}, \qquad
\beta_{nk}=-(\bar{u}_k,g_n)_{KG}.
\label{alphbeta}
\eeq
Given that $\sum_k|g_k)_{KG}(g_k| - |\overline{g}_k)_{KG}(\overline{g}_k|$
is the identity operator on $\SolC$, it follows from \eqref{alphbeta} that
\bea
  u_n(x)& = &\sum_k(g_k,u_n)_{KG}g_k(x) - (\bar{g}_k,u_n)_{KG} \bar{g}_k(x) \notag \\
  &=&\sum_k\bar{\alpha}_{kn}g_k(x) - \beta_{kn}\bar{g}_k(x) \ .
\label{u-g}
\eea
Using \eqref{g-u} and \eqref{IPS2}, it can
also
be verified that
\beq
  \beta_{nk} = \frac{<\bar{u}_k,g_n>} {<u_k,u_k>} \ .
  \label{beta1}
\eeq
Similarly,
\eqref{u-g} and the $L^2$ orthogonality conditions for the $g_i(x)$s
(which amounts to replacing all $u$'s with $g$'s in \eqref{IPS2})
imply
\beq
  \beta_{nk} = -\frac{<\bar{g}_n,u_k>} {<g_n,g_n>} \ .
  \label{beta2}
\eeq
Given that
$\langle\bar{u}_k,g_n\rangle \,=\, \langle\bar{g}_n,u_k\rangle$,
$\langle u_k,u_k \rangle \, >  0$, and
$\langle g_n,g_n \rangle \, > 0$,
it follows from \eqref{beta1} and \eqref{beta2} that
$\langle\bar{u}_k,g_n\rangle\,=\, 0$.
This in turn implies that $\beta_{nk}=0$,
which together with \eqref{temp3} proves the theorem.
\end{proof}
Note that this proof is valid for any Hermitian inner product $<,>$ on
$\SolC$ which enjoys the additional property $<\bar{f},g>\,=\,<\bar{g},f>\,.$

We have shown here that requiring \eqref{IPS2}, in addition to the
usual quantization conditions, is enough to define a distinguished vacuum state.
That these modes are eigenfunctions of $i\Delta$ ($(i\Delta u_n)(x)=<u_n,u_n>u_n(x)$),
follows from \eqref{E1}.

\section{Useful Theorems}
\label{theorems}
\begin{thm}\footnote%
 {This is an exact restatement of Theorem 4.1.2 of \cite{Wald}, which we have included for the convenience of the reader.
 The symbol `$D(S)$' denotes the so-called domain of dependence of $S$.}
 Let $(M,g_{ab})$ be a globally hyperbolic spacetime with smooth, spacelike Cauchy surface $\Sigma$. Then the Klein-Gordon
 equation \eqref{KG} has a well posed initial value formulation in the following sense: Given any pair of smooth ($\textrm{C}^{\infty}$)
 functions $(\phi_0,\dot{\phi}_0)$ on $\Sigma$, there exists a unique solution $\phi$ to \eqref{KG}, defined on all of $M$, such that
 on $\Sigma$ we have $\phi=\phi_0$ and $n^a\nabla_a\phi=\dot{\phi}_0$, where $n^a$ denotes the unit (future-directed) normal to $\Sigma$.
 Furthermore, for any closed subset $S\subset\Sigma$, the solution, $\phi$, restricted to D(S) depends only upon the initial data on S.
 In addition, $\phi$ is smooth and varies continuously with the initial data.
 \label{main}
\end{thm}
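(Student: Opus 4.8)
The plan is to establish the three components of well-posedness — existence of a solution on all of $M$, its uniqueness (together with the domain-of-dependence property), and continuous dependence on the data — by the classical energy-estimate method for linear hyperbolic equations, supplemented by a local solvability result and a globalization argument that uses only global hyperbolicity. It suffices to treat real $\phi$, the complex case following by splitting into real and imaginary parts.

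\emph{Uniqueness, finite propagation speed, and continuity via energy estimates.} First I would introduce the stress tensor $T_{\mu\nu}=\nabla_\mu\phi\,\nabla_\nu\phi-\frac{1}{2}g_{\mu\nu}\bigl(\nabla^\lambda\phi\,\nabla_\lambda\phi-m^2\phi^2\bigr)$, which on a solution of \eqref{KG} obeys $\nabla^\mu T_{\mu\nu}=(\block+m^2)\phi\,\nabla_\nu\phi=0$, and which satisfies the dominant-energy inequality $T_{\mu\nu}\xi^\mu\zeta^\nu\ge0$ for future-directed causal $\xi,\zeta$ (here $m^2\ge0$ is what makes this hold). Choosing a smooth future-directed timelike field $\xi^\mu$ and setting $J^\mu=-T^{\mu\nu}\xi_\nu$, one has $\nabla_\mu J^\mu=-T^{\mu\nu}\nabla_\mu\xi_\nu$, which on any compact region is pointwise bounded by a constant times the energy density $T_{\mu\nu}n^\mu n^\nu$ associated to the chosen foliation. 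Integrating $\nabla_\mu J^\mu$ over the part of $D(S)$ lying between $\Sigma$ and a nearby Cauchy slice and applying the divergence theorem, the contribution of the null portion of the boundary comes in with a favourable sign and may be discarded, so that the energy on the upper slice is controlled by the energy on $S$ plus a constant times the integral of the energy over the intervening region; Gr{\"o}nwall's inequality then forces $\phi\equiv0$ on $D(S)$ whenever the data vanish on $S$. By linearity this yields at once uniqueness of solutions and the assertion that $\phi|_{D(S)}$ depends only on the data on $S$. Applying the same estimate to coordinate derivatives of $\phi$ — which satisfy \eqref{KG} up to lower-order source terms produced by commuting derivatives past the curved d'Alembertian — bounds every Sobolev norm of $\phi$ on later slices by the corresponding norms of the data, which, after Sobolev embedding and linearity, is exactly the continuous (indeed smooth) dependence claimed.

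\emph{Existence and globalization.} Next I would produce the solution itself. Locally near $\Sigma$ this is the standard existence theorem for linear hyperbolic equations: in adapted coordinates, \eqref{KG} becomes a first-order symmetric-hyperbolic system for $(\phi,\nabla_\mu\phi)$ with smooth coefficients, for which smooth data yield a smooth solution on a neighbourhood of $\Sigma$ with influence propagating only along the light cones (Leray, Choquet--Bruhat; a textbook account is given in the references cited in \cite{Wald}). To pass from local to global I would invoke Geroch's theorem that a globally hyperbolic $M$ is diffeomorphic to $\Reals\times\Sigma$ with each slice $\{t\}\times\Sigma$ a Cauchy surface, and then run a connectedness argument: the set of $t$ for which a smooth solution with the prescribed data exists on the slab up to time $t$ is nonempty (local existence), open (local existence again, translated in time), and closed (uniqueness lets overlapping local solutions be patched into a single one), hence all of $\Reals$; this also shows the solution is defined on \emph{all} of $M$, as asserted.

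\emph{Anticipated main obstacle.} The energy-estimate part is essentially self-contained and is the conceptual heart of the argument, while the globalization is routine once global hyperbolicity and Geroch's splitting are in hand. The genuinely analytic input is the \emph{local} existence of smooth solutions for arbitrary smooth (not necessarily analytic) data on a curved background, which rests on the theory of linear symmetric-hyperbolic systems and, in a fully detailed proof, would be imported wholesale rather than derived from scratch — which is precisely why the statement is here quoted verbatim from \cite{Wald}.
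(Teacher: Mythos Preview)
The paper does not supply a proof of this theorem at all: it is stated without argument as an exact restatement of Theorem~4.1.2 of \cite{Wald}, included only for the reader's convenience. Your sketch is therefore not competing with anything in the paper, and you yourself recognise this in your final paragraph.

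As a proof outline your proposal is the standard one and is essentially correct: dominant energy condition plus divergence theorem over a lens-shaped region in $D(S)$, discard the null-boundary term by sign, Gr{\"o}nwall for the energy inequality, commute derivatives for higher Sobolev control, import local existence from symmetric-hyperbolic theory, and globalize via Geroch's splitting. One small imprecision: your justification for the \emph{closedness} step in the connectedness argument (``uniqueness lets overlapping local solutions be patched'') is really the consistency-of-patching statement, not closedness. Closedness requires that a solution on $[0,t_0)$ extends to $t=t_0$, for which you need the a~priori energy bounds (already obtained) to control the data on slices approaching $t_0$ and then invoke local existence once more from a slice just below $t_0$. This is a cosmetic fix, not a gap in the strategy.
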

The above theorem continues to hold if a (fixed, smooth) ``source term"
$f$ is inserted on the right hand side of \eqref{KG}. It then follows
from the domain of dependence feature of the theorem that there exist
unique advanced and retarded solutions to the Klein-Gordon equation with
source, whence in a globally hyperbolic spacetime there exist unique
advanced and retarded Green's functions for the Klein-Gordon equation.

\begin{thm}\footnote{This is a generalization of Lemma 3.2.1 of \cite{Wald}.}
 $\Delta:\textrm{C}_0^{\infty}\to\Sol$ satisfies $\int_M\phi f\ud V=\Omega(\Delta f,\phi)$
 for all $f\in\textrm{C}_0^{\infty}$ and $\phi\in\Sol$~.
 \label{main2}
\end{thm}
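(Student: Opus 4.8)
The plan is to reduce the identity to a single application of the divergence theorem, exploiting the fact --- established already in the text --- that $\Delta f=G_R f-G_A f\in\Sol$ is a genuine solution of the homogeneous equation \eqref{KG}, so that $\Omega(\Delta f,\phi)$ is independent of the Cauchy surface on which it is evaluated.

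First I would fix the geometry. Since $f\in\Co$ has compact support, I can choose, within the given foliation, a Cauchy surface $\Sigma^-$ lying wholly to the past of $\mathrm{supp}(f)$ and a Cauchy surface $\Sigma^+$ lying wholly to its future; let $V$ denote the region they bound. Because $G_R(x,y)=0$ unless $x\succ y$, the retarded solution $G_R f$ is supported in $J^+(\mathrm{supp}\,f)$, which --- since future-directed causal curves cross the leaves of the foliation in order of increasing time --- is disjoint from a neighbourhood of $\Sigma^-$; dually $G_A f$ vanishes near $\Sigma^+$. Hence $\Omega_{\Sigma^-}(G_R f,\phi)=0$ and $\Omega_{\Sigma^+}(G_A f,\phi)=0$, and therefore $\Omega(\Delta f,\phi)=\Omega_{\Sigma^+}(\Delta f,\phi)=\Omega_{\Sigma^+}(G_R f,\phi)$ (the first equality by surface-independence, the second by linearity and the vanishing of the $G_A f$ term).

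The core step is to apply the divergence theorem on $V$ to the current $J^\mu\equiv(G_R f)\,\nabla^\mu\phi-\phi\,\nabla^\mu(G_R f)$. Its divergence is $\nabla_\mu J^\mu=(G_R f)\,\block\phi-\phi\,\block(G_R f)$; using $\block\phi=-m^2\phi$ (valid since $\phi\in\Sol$) and $(\block+m^2)(G_R f)=-f$ (which follows from \eqref{RAG}), the $m^2$-terms cancel and $\nabla_\mu J^\mu=\phi f$. With the outward normals oriented so as to match the convention in \eqref{symp}, the boundary integral over $\partial V=\Sigma^+\cup\Sigma^-$ equals $\Omega_{\Sigma^+}(G_R f,\phi)-\Omega_{\Sigma^-}(G_R f,\phi)$, while $\int_V\nabla_\mu J^\mu\ud V=\int_V\phi f\ud V=\int_M\phi f\ud V$ because $\mathrm{supp}(f)\subset V$. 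Combining this with the previous paragraph yields $\Omega(\Delta f,\phi)=\int_M\phi f\ud V$, which is the claim.

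I expect the obstacles here to be matters of bookkeeping rather than of substance. Two points need care: (i) matching the sign and orientation conventions of the divergence theorem to the definition \eqref{symp} of $\Omega$ --- in particular that the ``outward'' normal is future-directed on $\Sigma^+$ and past-directed on $\Sigma^-$ --- and (ii) checking that the divergence theorem genuinely applies, i.e.\ that $J^\mu$ has compact support on each Cauchy slice of $V$ so that nothing escapes to spatial infinity. For (ii) I would note that $\phi\in\Sol$ has spatially compact support on every slice and that $G_R f$ is supported in $J^+(\mathrm{supp}\,f)$, whose intersection with any Cauchy surface is compact by global hyperbolicity --- so the product current is indeed compactly supported --- but I would not dwell on it. (This is essentially the argument of Lemma 3.2.1 of \cite{Wald}, adapted to include the source term.)
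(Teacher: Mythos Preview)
Your proposal is correct and is essentially the same argument as the paper's: both apply Green's second identity (the divergence theorem for the current $J^\mu=\psi\nabla^\mu\phi-\phi\nabla^\mu\psi$) between $\phi$ and an inhomogeneous solution, then kill one boundary term by support considerations. The only cosmetic difference is that the paper works with the \emph{advanced} solution $Af$ and reads off $\Omega(\Delta f,\phi)$ on the past surface $\Sigma_{t_1}$ (where $Rf=0$), whereas you use the \emph{retarded} solution $G_Rf$ and read off on $\Sigma^+$ (where $G_Af=0$); your explicit attention to the compact-support hypothesis needed for Stokes is, if anything, more careful than the paper's.
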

\begin{proof}
Let's first give a simple proof for 1+1 dimensional Minkowski space and
then generalize the arguments to globally hyperbolic spacetimes of
higher spatial dimension.
Pick $t_1,t_2\in\Reals$ such that $f=0$ for $t\notin[t_1,t_2]$.
Then:
\beq
\int_M\phi f\ud V=\int_{t_1}^{t_2}\ud t\int_{-\infty}^{\infty}\phi(\partial_t^2-\partial_x^2+m^2)(Af)\ud x,
\label{dummy1}
\eeq
where $Af$ is the advanced solution of the Klein-Gordon equation with
source $f$.
Now integrating by parts twice yields
\bea
\int_{t_1}^{t_2}\phi\partial_t^2(Af)\ud t&=&\phi\partial_t(Af)|^{t_2}_{t_1}-\partial_t\phi(Af)|^{t_2}_{t_1}+\int_{t_1}^{t_2}(Af)\partial_t^2\phi\ud t \\
 &=& \partial_t\phi(Af)|_{t_1}-\phi\partial_t(Af)|_{t_1}+\int_{t_1}^{t_2}(Af)(\partial_x^2-m^2)\phi\ud t,
\eea where we have used the Klein-Gordon equation and that $Af$ vanishes outside the causal past of the support of $f$.
Substituting this back into \eqref{dummy1}, we can further simplify by noting that
$\int_{-\infty}^{\infty}\big[(Af)\partial_x^2\phi-\phi\partial_x^2(Af)\big]\ud
x=0$~.
(Because $Af$ induces initial data of compact support on all
equal-time spatial slices, all surface terms vanish.)
Using this we find:
\beq
\int_M\phi f\ud V=\int_{-\infty}^{\infty}\big[\partial_t\phi(Af)|_{t_1}-\phi\partial_t(Af)|_{t_1}\big]\ud x=\Omega(\Delta f,\phi),
\eeq since $Rf$ (the retarded solution of the Klein-Gordon equation with source $f$) vanishes at $t=t_1$.
To generalize the proof, we
write
\bea
\int_M\phi f\ud V&=&\int_M\phi(\nabla_{\mu}\nabla^{\mu}+m^2)(Af)\ud V \notag\\
&=& \int_M\nabla_{\mu}\{\phi\nabla^{\mu}(Af)\}\ud V-\int_M\nabla_{\mu}\phi\nabla^{\mu}(Af)\ud V+\int_Mm^2\phi(Af)\notag\\
&=& \int_M\nabla_{\mu}\{\phi\nabla^{\mu}(Af)-(Af)\nabla^{\mu}\phi\}\ud V+\int_M(\nabla_{\mu}\nabla^{\mu}+m^2)\phi(Af)\ud V \notag\\
&=&\int_{\partial M}\{\phi n^{\mu}\nabla_{\mu}(Af)-(Af)n^{\mu}\nabla_{\mu}\phi\}\sqrt{-h}\ud^3x,
\eea where we have used the Stoke's theorem in the last line. Here $n^{\mu}$ is the unit normal to the boundary
$\partial M$ and $h$ is the determinant of the induced metric on the
boundary.
Again, because $Af$ induces initial data of compact support on all
equal-time spatial slices and $Af$ vanishes outside the causal past of the support of $f$:
\bea
 \int_{\partial M}\{\phi n^{\mu}\nabla_{\mu}(Af)-(Af)n^{\mu}\nabla_{\mu}\phi\}\sqrt{-h}\ud^3x
 &=&-\int_{\Sigma_{t_1}}\{\phi n^{\mu}\nabla_{\mu}(Af)-(Af)n^{\mu}\nabla_{\mu}\phi\}\sqrt{-h}\ud^3x\notag\\
 &=&\Omega(\Delta f,\phi),
\eea
where we have used in the last line that $Rf$ vanishes at $\Sigma_{t_1}$.
\end{proof}

\begin{thm}
 $G_R(x,z)=G_A(z,x)$.
 \label{main3}
\end{thm}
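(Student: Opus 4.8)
The plan is to prove the identity in the distributional sense, i.e.\ to show that $\int_M\int_M f(x)\,G_R(x,z)\,h(z)\,\ud V_x\,\ud V_z = \int_M\int_M f(x)\,G_A(z,x)\,h(z)\,\ud V_x\,\ud V_z$ for all $f,h\in\Co$, from which $G_R(x,z)=G_A(z,x)$ follows. The engine is Green's second identity for the operator $\block+m^2$, applied exactly in the manner of the proof of Theorem \ref{main2} above.

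First I would introduce $\psi\equiv G_R h$, the retarded solution of the inhomogeneous Klein--Gordon equation with source $h$, so that $(\block+m^2)\psi=-h$ and, by the domain-of-dependence property of Theorem \ref{main} (in its source-term version remarked on just after that theorem), $\psi$ is supported in $J^{+}(\mathrm{supp}\,h)$; likewise $\chi\equiv G_A f$, the advanced solution with source $f$, so that $(\block+m^2)\chi=-f$ and $\chi$ is supported in $J^{-}(\mathrm{supp}\,f)$. Then I would evaluate $\int_M\big[\chi\,(\block+m^2)\psi-\psi\,(\block+m^2)\chi\big]\,\ud V$ in two ways. Directly from the field equations it equals $\int_M(\psi f-\chi h)\,\ud V$. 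Rewriting the integrand as the total divergence $\nabla_\mu\{\chi\nabla^\mu\psi-\psi\nabla^\mu\chi\}$ and applying Stokes' theorem on the region lying between two Cauchy surfaces $\Sigma_{t_1}$ and $\Sigma_{t_2}$ --- with $\Sigma_{t_1}$ entirely to the past of $\mathrm{supp}\,h$ and $\Sigma_{t_2}$ entirely to the future of $\mathrm{supp}\,f$ --- it equals a surface integral over $\Sigma_{t_1}\cup\Sigma_{t_2}$.

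Next I would argue that this surface term vanishes: $\psi$ and its normal derivative vanish on $\Sigma_{t_1}$ because $\psi$ is retarded, $\chi$ and its normal derivative vanish on $\Sigma_{t_2}$ because $\chi$ is advanced, and there is no contribution from spatial infinity because $\mathrm{supp}\,\chi\nabla^\mu\psi-\psi\nabla^\mu\chi\subseteq J^{+}(\mathrm{supp}\,h)\cap J^{-}(\mathrm{supp}\,f)$, which is compact by global hyperbolicity. Hence $\int_M\psi f\,\ud V=\int_M\chi h\,\ud V$. Unfolding $\psi=G_R h$ and $\chi=G_A f$ as convolutions against the Green functions and relabelling the integration variables turns this into $\int_M\int_M f(x)\,G_R(x,z)\,h(z)\,\ud V_x\,\ud V_z = \int_M\int_M f(x)\,G_A(z,x)\,h(z)\,\ud V_x\,\ud V_z$; since $f,h\in\Co$ were arbitrary, $G_R(x,z)=G_A(z,x)$.

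The only step that demands real care is the vanishing of the boundary integral, which rests on two facts bundled into the global-hyperbolicity hypothesis and Theorem \ref{main}: that $J^{+}(K_1)\cap J^{-}(K_2)$ is compact whenever $K_1,K_2$ are compact, and that the retarded (resp.\ advanced) solution with compactly supported source induces trivial Cauchy data on any slice pushed sufficiently far to the past (resp.\ future) of the source. Granting these, everything else is the same integration by parts already carried out for Theorem \ref{main2}.
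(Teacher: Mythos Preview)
Your proof is correct and follows essentially the same route as the paper's: Green's second identity for $\block+m^2$, conversion to a boundary integral via Stokes' theorem on a slab between two Cauchy surfaces, and vanishing of the boundary terms by the support properties of the retarded and advanced solutions. The only difference is cosmetic --- you smear both arguments against test functions $f,h\in\Co$ and work with $\psi=G_Rh$, $\chi=G_Af$, whereas the paper carries out the identical manipulation directly on the kernels $G_R(y,z)$ and $G_A(y,x)$ with fixed $x,z$; your version is the distributionally rigorous rendering of the same computation.
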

\begin{proof}
Let $M$ be the region bounded by a pair of Cauchy surfaces, one to the
future of both $x$ and $z$, the other to their past.
Then
\bea
 G_R(x,z)-G_A(z,x)&=&\int_M\big\{G_R(y,z)\delta^4(y-x)-G_A(y,x)\delta^4(y-z)\big\}\ud^4y \notag\\
 &=&\int_M\big\{G_A(y,x)(\Box_y+m^2)G_R(y,z)-G_R(y,z)(\Box_y+m^2)G_A(y,x)\big\}\ud V_y \notag \\
 &=&\int_M\big\{G_A(y,x)\Box_yG_R(y,z)-G_R(y,z)\Box_yG_A(y,x)\big\}\ud V_y\notag \\
 &=&\int_M\nabla_{\mu}\big\{G_A(y,x)\nabla^{\mu}G_R(y,z)-G_R(y,z)\nabla^{\mu}G_A(y,x)\big\}\ud V_y \notag \\
 &=&\int_{\partial M}\big\{G_A(y,x)n^{\mu}\nabla_{\mu}G_R(y,z)-G_R(y,z)n^{\mu}\nabla_{\mu}G_A(y,x)\big\}
 \sqrt{-h}\ud^3y=0\notag
\eea
where we have used the Stoke's theorem in the last line. Here $n^{\mu}$
is the unit normal to the boundary $\partial M$ and $h$ is the
determinant of the induced metric on the boundary.
We now explain why the last expression is identically zero. By
definition, it is only when $y$ is in the causal future of $z$ and the
causal past of $x$ where this expression could be nonzero. However,
because $y$ is being evaluated at the boundary this is never possible in
a globally hyperbolic spacetime.
\end{proof}

\bibliographystyle{JHEP.bst}
\bibliography{Vacuum}

\end{document}